\documentclass{ifacconf}

\usepackage{graphicx}      
\usepackage{natbib}        

\usepackage{amsmath}
\usepackage{amssymb}
\usepackage{booktabs}
\usepackage{tikz}
\usetikzlibrary{arrows.meta}
\usetikzlibrary{decorations.pathreplacing}
\usetikzlibrary{calc}
\newcommand{\penetration}{\ensuremath{y}}
\newcommand{\sigProb}{\ensuremath{\beta}}
\newcommand{\accidentCost}{\ensuremath{r}}
\newcommand{\accidentFunc}{\ensuremath{p}}
\newcommand{\detectionFunc}{\ensuremath{q}}

\newcommand{\equiProfile}{\ensuremath{x}}
\newcommand{\randVarMass}{\ensuremath{D}}
\newcommand{\du}{\ensuremath{d_{\mathrm{u}}}}
\newcommand{\ds}{\ensuremath{d_{\mathrm{s}}}}
\newcommand{\totalGroupM}{\ensuremath{\bar{x}}}

\newcommand{\groupN}{\ensuremath{\mathrm{n}}}
\newcommand{\groupVU}{\ensuremath{\mathrm{vu}}}
\newcommand{\groupVS}{\ensuremath{\mathrm{vs}}}
\newcommand{\groupSet}{\ensuremath{\mathcal{G}}}
\newcommand{\groupEle}{\ensuremath{g}}

\newcommand{\actionRec}{\ensuremath{\mathrm{R}}}
\newcommand{\actionCare}{\ensuremath{\mathrm{C}}}
\newcommand{\actionEle}{\ensuremath{a}}

\newcommand{\expC}{\ensuremath{J}}

\newcommand{\eventAcci}{\ensuremath{A}}
\newcommand{\eventDisp}{\ensuremath{S}}
\newcommand{\probability}[1]{\ensuremath{\Pr(#1)}}
\newcommand{\condProb}[2]{\ensuremath{\Pr(#1 \mid #2)}}

\newcommand{\TOne}{\ensuremath{\expP_\groupN}}
\newcommand{\TTwo}{\ensuremath{\expP_\groupVU}}
\newcommand{\TOneVal}{\ensuremath{\frac{1}{1+\accidentCost}}}
\newcommand{\TTwoVal}{\ensuremath{\frac{1}{1+\accidentCost(1-\sigProb\detectionFunc(\penetration))}}}

\newcommand{\expP}{\ensuremath{P}}
\newcommand{\expPNE}{\ensuremath{\expP^{\mathrm{ne}}}}
\newcommand{\candProbOneVal}{\ensuremath{\frac{\accidentFunc(1)}{1+\sigProb\detectionFunc(\penetration)\bigl(\accidentFunc(1)-\accidentFunc(1-\penetration)\bigr)}}}
\newcommand{\candProbTwoVal}{\ensuremath{\frac{\accidentFunc(\equiProfile_\groupN+\penetration)}{1+\sigProb\detectionFunc(\penetration)\bigl(\accidentFunc(\equiProfile_\groupN+\penetration)-\accidentFunc(\equiProfile_\groupN)\bigr)}}}
\newcommand{\candProbThreeVal}{\ensuremath{\frac{\accidentFunc(\penetration)}{1+\sigProb\detectionFunc(\penetration)\bigl(\accidentFunc(\penetration)-\accidentFunc(0)\bigr)}}}
\newcommand{\candProbFourVal}{\ensuremath{\frac{\accidentFunc(\equiProfile_\groupVU)}{1+\sigProb\detectionFunc(\penetration)\bigl(\accidentFunc(\equiProfile_\groupVU)-\accidentFunc(0)\bigr)}}}
\newcommand{\candProbFiveVal}{\ensuremath{\accidentFunc(0)}}

\newcommand{\consistencyEq}{\ensuremath{\frac{\accidentFunc(\du)}{1+\sigProb\detectionFunc(\penetration)\bigl(\accidentFunc(\du)-\accidentFunc(\ds)\bigr)}}}

\DeclareMathOperator{\NE}{NE}
\newcommand{\regionSet}{\ensuremath{\mathcal{R}}}
\newcommand{\Tstate}{\ensuremath{B}}
\newcommand{\candSet}{\ensuremath{X}}
\newcommand{\paraCond}{\ensuremath{C}}

\newcommand{\effSig}{\ensuremath{k}}
\newcommand{\utilityU}{\ensuremath{u}}
\newcommand{\utilityV}{\ensuremath{v}}

\newcommand{\profileInput}{\ensuremath{(\equiProfile; \sigProb, \penetration)}}
\newcommand{\paraInput}{\ensuremath{(\sigProb, \penetration)}}
\newcommand{\optimalSigP}{\ensuremath{\sigProb^*}}

\begin{document}
\begin{frontmatter}

\title{Adverse Effects of V2V Adoption on Road Safety} 


\author[UCCS]{Zhenqi Liu} 
\author[UCCS,Polito]{Philip N. Brown} 
\author[UCCS]{Keith Paarporn}

\address[UCCS]{Computer Science Department, 
	University of Colorado Colorado Springs, 
	Colorado Springs, CO 80918 USA \\
	(e-mail: {\tt\{zliu3, pbrown2, kpaarpor\}@uccs.edu})}
\address[Polito]{DISMA, Politecnico di Torino, Italy}

\begin{abstract}                
Vehicle-to-vehicle (V2V) communication is expected to improve road safety and reduce congestion. However, prior work shows that V2V information sharing under partial adoption may increase congestion and decrease safety. We study whether increasing V2V adoption itself affects road safety. We propose a corrected version of an existing model and analyze its behavior under varying adoption levels. We show that, in some cases, increased V2V adoption can increase accident probability. Moreover, under an optimal signaling policy, the system can ensure that accident probability is non-increasing in the adoption level.
\end{abstract}

\begin{keyword}
Transportation systems; vehicle-to-vehicle communication; signaling; game theory; partial adoption; traffic safety; equilibrium analysis.
\end{keyword}

\end{frontmatter}

\section{Introduction}
\label{sec:introduction}
Vehicle-to-vehicle (V2V) communication is an emerging smart transportation technology that allows equipped vehicles to exchange information about road hazards. Such systems have the potential to improve road safety by warning drivers of road hazards or dangerous conditions before they encounter them. However, transportation systems are shaped not only by technical performance, but also by the behavioral responses of strategic drivers. When drivers change their behavior in response to information, the resulting aggregate behavior can feed back into the safety or congestion state of the road system. Thus, understanding the effect of V2V communication requires an equilibrium perspective.

Equilibrium models of transportation systems go back at least to Wardrop's classical principles, which distinguish individually stable route choices from system-optimal traffic assignments \citep{wardrop1952road}. A large literature on selfish routing and traffic equilibrium shows that individually optimal driver behavior need not produce system-optimal outcomes \citep{correa2004selfish}, and that transportation equilibria can exhibit counterintuitive phenomena such as Braess-type paradoxes \citep{dafermos1984some}. Information can introduce an additional source of counterintuitive behavior. Driver information systems may improve route choice and network performance, but they may also create concentration, overreaction, or other adverse effects \citep{ben1991dynamic,balakrishna2013information}. In particular, heterogeneous access to information can change equilibrium outcomes in nontrivial ways \citep{liu2016effects}, and providing additional information to a subset of drivers can even make informed drivers worse off, a phenomenon known as informational Braess' paradox \citep{acemoglu2018informational}.

These observations connect naturally to the literature on information design. Bayesian persuasion studies how a sender can influence a receiver's action by choosing what information to reveal \citep{kamenica2011bayesian}, and information design provides a broader framework for studying how information structures affect strategic behavior \citep{bergemann2019information}. Recent work has applied these ideas to transportation networks, studying how public signals, private recommendations, or partial participation in information systems can be used to regulate equilibrium traffic flows \citep{wu2019information,massicot2021competitive,zhu2022information}. These works show that full information disclosure is not always optimal and that the design of information provision can substantially affect equilibrium performance.

Our work is most closely related to the V2V hazard-warning model of \citet{gould2022partial}, which studies partial adoption of V2V communication in a nonatomic driver population. In that model, only a fraction of drivers are V2V-equipped, and equipped drivers may condition their behavior on whether a warning signal is displayed. They show that sharing hazard information only with V2V-equipped drivers can paradoxically increase the equilibrium frequency of accidents relative to no information sharing. An extended journal version further develops this information-design perspective for V2V communication, including false positives and social-cost considerations \citep{gould2023information}. These results are part of a broader theme in which emerging transportation technologies can have unintended equilibrium effects \citep{mehr2019will}.

In this paper, we revisit the partial-adoption V2V signaling model under a corrected accident-probability consistency equation. Motivated by the adverse effects identified in \citet{gould2022partial}, we ask whether the original consistency model should be revised, what happens as V2V adoption increases under the corrected model, and whether the resulting adverse effects can be mitigated. Our main contributions are as follows:
\begin{itemize}
	\item We identify a consistency issue in the original accident-probability model: unsignaled and signaled V2V behaviors are contingent on different information states and should not be averaged as simultaneous reckless masses. We formulate the corrected consistency equation.
	\item Under the corrected model, we derive an explicit equilibrium-region characterization and use it to analyze the effect of V2V adoption. This characterization shows that, under fixed signaling probabilities, increasing V2V adoption can increase the equilibrium accident probability, and that any such local perverse effect can occur only in two specific equilibrium regions.
	\item We show that the perverse effect can be eliminated by optimizing the signaling probability. Specifically, if the signaling probability is chosen optimally at each adoption level, then the optimized equilibrium accident probability is weakly decreasing in V2V adoption.
\end{itemize}

The rest of the paper is organized as follows. Section~\ref{sec:model} presents the corrected V2V signaling model. Section~\ref{sec:main-results} states the main results. Section~\ref{sec:proofs} provides the equilibrium characterization and proofs. Section~\ref{sec:discussion} discusses implications of the results, and Section~\ref{sec:conclusion} concludes.

\section{Model}
\label{sec:model}
\subsection{Game Setup}
The population of drivers is modeled as a continuum with total mass one. A fraction \(\penetration\in[0,1]\) of drivers are V2V-equipped, which we call V2V drivers, while the remaining fraction \(1-\penetration\) are not, which we call non-V2V drivers. We refer to \(\penetration\) as the V2V adoption level. A driver can choose either to drive carefully or recklessly. We let \(\actionEle\in\{\actionCare,\actionRec\}\) denote a driver's action, where \(\actionCare\) denotes careful driving and \(\actionRec\) denotes reckless driving.

Let \(\eventAcci\) denote the event that an accident occurs. Conditional on an accident occurring, the V2V system may detect it. We write the detection probability as \(\detectionFunc(\penetration)\), reflecting its dependence on the V2V adoption level. Throughout this manuscript, we assume that \(\detectionFunc:[0,1]\to[0,1]\) is continuous and weakly increasing, so a higher V2V adoption level makes accident detection weakly more likely. We do not impose strict monotonicity unless explicitly stated.

In the V2V setting, once an accident is detected, the system designer can choose whether to display the detected accident information to V2V-equipped drivers. We denote by \(\sigProb\in[0,1]\) the probability that this information is displayed to V2V-equipped drivers.

Let \(\eventDisp\) denote the event that a warning signal is shown to a V2V driver. This event happens only when an accident occurs, the V2V system detects it, and the system displays the detected accident information with probability \(\sigProb\). Therefore,
\[
\probability{\eventDisp}
=
\probability{\eventAcci}\detectionFunc(\penetration)\sigProb.
\]
In our model, the accident probability \(\probability{\eventAcci}\) is endogenously determined. Driver behavior influences the accident probability, and the accident probability, together with the primitive game parameters, in turn influences driver behavior. Careful drivers do not negatively affect road safety, while the total mass of reckless drivers determines how likely an accident is to occur. Thus, let \(\equiProfile\) denote the behavior profile encoding the mass of reckless drivers, and let \(\expP\profileInput\) denote the accident probability induced by \(\equiProfile\), \(\sigProb\), and \(\penetration\). Then \(\probability{\eventAcci}=\expP\profileInput\), and
\begin{equation}
	\label{eq:signal-display-probability}
	\probability{\eventDisp}
	=
	\expP\profileInput\detectionFunc(\penetration)\sigProb.
\end{equation}
The population is naturally divided into non-V2V drivers and V2V-equipped drivers. Non-V2V drivers do not receive warning signals, while V2V-equipped drivers may condition their behavior on whether a warning signal is displayed. Thus, a V2V driver has two possible information states: unsignaled and signaled. We denote by \(\equiProfile_\groupN\in[0,1-\penetration]\) the reckless mass of non-V2V drivers, by \(\equiProfile_\groupVU\in[0,\penetration]\) the reckless V2V mass on the unsignaled event \(\eventDisp^c\), and by \(\equiProfile_\groupVS\in[0,\penetration]\) the reckless V2V mass on the signaled event \(\eventDisp\). The behavior profile is therefore written as
\[
\equiProfile = (\equiProfile_\groupN,(\equiProfile_\groupVU,\equiProfile_\groupVS)).
\]
Although one may informally refer to non-V2V, unsignaled V2V, and signaled V2V drivers as separate groups, this notation should be understood as a contingent behavior profile of the V2V population under the unsignaled and signaled events. In particular, \(\equiProfile_\groupVU\) and \(\equiProfile_\groupVS\) are not added simultaneously in a realized information state. Therefore, we define the unsignaled-state and signaled-state total reckless masses by
\[
\du(\equiProfile):=\equiProfile_\groupN+\equiProfile_\groupVU,
\qquad
\ds(\equiProfile):=\equiProfile_\groupN+\equiProfile_\groupVS.
\]
When the profile \(\equiProfile\) is clear from context, we simply write \(\du\) and \(\ds\).

We define the function \(\accidentFunc:[0,1]\to[0,1]\), which maps a realized total reckless driver mass to the corresponding accident probability. Throughout the manuscript, we assume that \(\accidentFunc\) is continuous and weakly increasing, so a larger reckless driver mass makes an accident weakly more likely. We do not impose strict monotonicity unless explicitly stated.

The consistency equation for \(\expP\profileInput\) is then
\begin{equation}
	\label{eq:consistency-equation-new}
	\expP\profileInput
	=
	\bigl(1-\probability{\eventDisp}\bigr)\accidentFunc(\du)
	+
	\probability{\eventDisp}\accidentFunc(\ds).
\end{equation}
This consistency equation can be solved explicitly for \(\expP\profileInput\). Substituting \eqref{eq:signal-display-probability} into \eqref{eq:consistency-equation-new} and solving for \(\expP\profileInput\) gives
\begin{equation}
	\label{eq:consistency-equation-explicit}
	\expP\profileInput = \consistencyEq.
\end{equation}
The expression in \eqref{eq:consistency-equation-explicit} removes the recursive appearance of \(\expP\profileInput\). In Section~\ref{sec:proofs-region-characterization}, we verify that this expression is well-defined and yields a valid probability for the equilibrium characterization.

\subsection{Equilibrium Definition}
A driver who chooses \(\actionRec\) incurs cost \(\accidentCost\) when an accident occurs and cost \(0\) otherwise. A driver who chooses \(\actionCare\) incurs regret cost \(1\) when no accident occurs and cost \(0\) otherwise. We normalize the regret cost to \(1\) and assume that the accident cost is strictly larger, so \(\accidentCost>1\). Let \(\groupEle\in\groupSet:=\{\groupN,\groupVU,\groupVS\}\) denote a driver group, and let \(\expC_\groupEle(\equiProfile;\actionEle)\) denote the expected cost of action \(\actionEle\) for group \(\groupEle\) under profile \(\equiProfile\). The expected costs are summarized in Table~\ref{tb:expected-cost}.
\begin{table}[h]
	\centering
	\caption{Expected cost by driver group and action.}
	\label{tb:expected-cost}
	\begin{tabular}{c|ccc}
		\toprule
		\text{Action} 
		& \(\groupN\)
		& \(\groupVU\)
		& \(\groupVS\) \\
		\midrule
		\(\actionCare\)
		& \(1-\probability{\eventAcci}\)
		& \(1-\condProb{\eventAcci}{\eventDisp^c}\)
		& \(1-\condProb{\eventAcci}{\eventDisp}\) \\[3pt]
		\(\actionRec\)
		& \(\accidentCost\probability{\eventAcci}\)
		& \(\accidentCost\condProb{\eventAcci}{\eventDisp^c}\)
		& \(\accidentCost\condProb{\eventAcci}{\eventDisp}\) \\
		\bottomrule
	\end{tabular}
\end{table}

We assume that warning signals have zero false-positive rate: whenever a warning signal is displayed, an accident has occurred. Hence
\[
\condProb{\eventAcci}{\eventDisp}=1.
\]
Under this assumption, the expected costs for signaled V2V drivers simplify to
\[
\expC_\groupVS(\equiProfile;\actionCare)
=
1-\condProb{\eventAcci}{\eventDisp}
=
0, \text{ and }
\]
\[
\expC_\groupVS(\equiProfile;\actionRec)
=
\accidentCost\condProb{\eventAcci}{\eventDisp}
=
\accidentCost.
\]
Therefore, since \(\accidentCost>1\), signaled V2V drivers strictly prefer careful driving.

Let \(\totalGroupM_\groupEle\) denote the total mass of group \(\groupEle\). Then
\[
\totalGroupM_\groupN=1-\penetration,
\qquad
\totalGroupM_\groupVU=\penetration,
\qquad
\totalGroupM_\groupVS=\penetration.
\]
For each \(\groupEle\in\groupSet\), the reckless mass satisfies \(0\le \equiProfile_\groupEle\le \totalGroupM_\groupEle\). We say \(\equiProfile\) is an equilibrium profile if each group assigns positive mass only to cost-minimizing actions. This can be written compactly as
\begin{equation}
	\label{eq:exp-cost-reckless-optimal}
	\equiProfile_\groupEle>0
	\implies
	\expC_\groupEle(\equiProfile;\actionRec) \leq
	\expC_\groupEle(\equiProfile;\actionCare),
\end{equation}
\begin{equation}
	\label{eq:exp-cost-careful-optimal}
	\equiProfile_\groupEle<\totalGroupM_\groupEle
	\implies
	\expC_\groupEle(\equiProfile;\actionCare) \leq
	\expC_\groupEle(\equiProfile;\actionRec),
\end{equation}
for every \(\groupEle\in\groupSet\). In words, if a positive mass of group \(\groupEle\) drives recklessly, then reckless driving must be weakly cost-minimizing; if a positive mass of group \(\groupEle\) drives carefully, then careful driving must be weakly cost-minimizing.

\begin{figure}[t]
	\begin{center}
		\includegraphics[width=8.4cm]{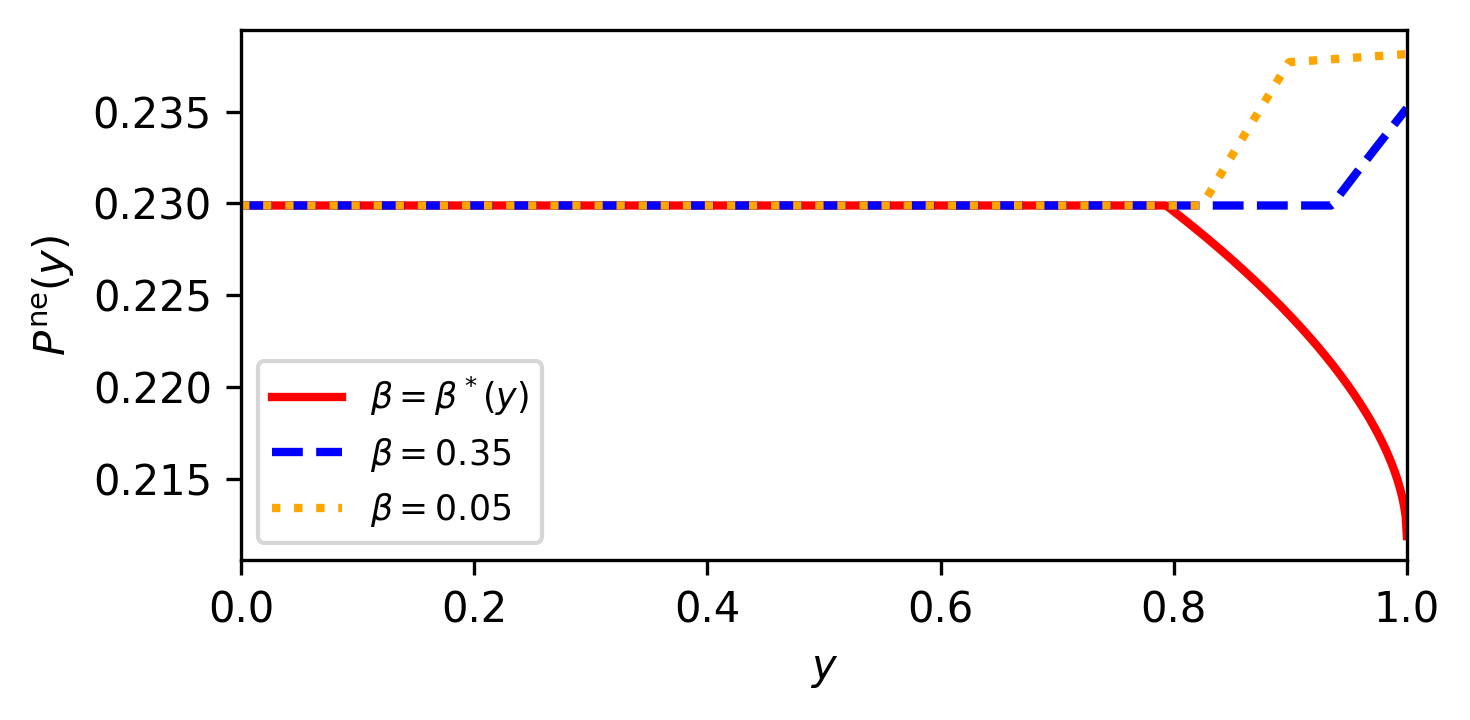}  
		\caption{Equilibrium accident probability as a function of V2V adoption under selected signaling choices, with \(\accidentCost=3.35\), \(\accidentFunc(z)=0.05+0.20\sqrt{z}\), and \(\detectionFunc(\penetration)=0.9\sqrt{\penetration}\). The traces under fixed non-optimal signaling probabilities exhibit increasing segments, while the optimized equilibrium accident probability is nonincreasing.} 
		\label{fig:policy-accident-diff}
	\end{center}
\end{figure}
\subsection{Corrected Accident Probability Model}
We retain the behavioral and cost structure of the original model \citep{gould2022partial}, but replace its consistency equation with \eqref{eq:consistency-equation-new}. The original consistency equation is
\[
\expP\profileInput
=
\accidentFunc\Bigl(
\equiProfile_\groupN
+
\bigl(1-\probability{\eventDisp}\bigr)\equiProfile_\groupVU
+
\probability{\eventDisp}\equiProfile_\groupVS
\Bigr).
\]
To explain the correction, define the random variable \(\randVarMass\) as the  realized total reckless mass. For a given profile \(\equiProfile\), let
\[
\randVarMass
=
\begin{cases}
	\du(\equiProfile), & \text{on } \eventDisp^c,\\
	\ds(\equiProfile), & \text{on } \eventDisp.
\end{cases}
\]
The original formulation effectively evaluates accident risk at an averaged reckless mass. In contrast, the corrected consistency equation evaluates the accident function separately in the unsignaled and signaled states and then averages the resulting accident probabilities. This distinction matters because \(\accidentFunc\) maps the realized reckless mass in a given information state to the accident probability in that state. The accident probability in one realized information state should not be evaluated using reckless masses from other states.

In simple terms, the original consistency equation evaluates accident probability as \(\accidentFunc\bigl(\mathbb E[\randVarMass]\bigr)\), whereas the corrected consistency equation evaluates accident probability as \(\mathbb E\bigl[\accidentFunc(\randVarMass)\bigr]\). Generally, unless \(\accidentFunc\) is linear,
\begin{equation}
	\label{eq:old-new-consistency-diff}
	\mathbb E\bigl[\accidentFunc(\randVarMass)\bigr]
	\neq
	\accidentFunc\bigl(\mathbb E[\randVarMass]\bigr).
\end{equation}

\subsection{Notation}
Although \(\expP\) depends on all primitive parameters, so that
\[
\probability{\eventAcci}
=
\expP(\equiProfile;\sigProb,\penetration,\accidentCost,\accidentFunc,\detectionFunc),
\]
we primarily study how the induced accident probability varies with the signaling probability \(\sigProb\) and the adoption level \(\penetration\), treating \(\accidentCost,\accidentFunc,\detectionFunc\) as fixed primitives. Therefore, we suppress this dependence and write \(\expP\profileInput\). When \(\sigProb\) and \(\penetration\) are fixed or clear from context, we simply write \(\expP(\equiProfile)\).

\section{Main Results}
\label{sec:main-results}
\begin{table*}[t]
	\centering
	\caption{Equilibrium-region characterization.}
	\label{tb:region-characterization}
	\begin{tabular}{c|c|c|c|c|c}
		\toprule
		\(i\)
		& Region \(\regionSet_i\)
		& Threshold statement \(\Tstate_i\profileInput\)
		& Candidate set \(\candSet_i\)
		& Parameter condition \(\paraCond_i\paraInput\)
		& \(\expPNE\paraInput\) \\
		\midrule
		
		\(1\)
		& \(\regionSet_1\)
		& \(\expP\profileInput<\TOne\)
		& \(\{(1-\penetration,\penetration,0)\}\)
		& \(\expP_1\paraInput<\TOne\)
		& \(\expP_1\paraInput\)
		\\[4pt]
		
		\(2\)
		& \(\regionSet_2\)
		& \(\expP\profileInput=\TOne\)
		& \(\{(\equiProfile_\groupN,\penetration,0):\equiProfile_\groupN\in[0,1-\penetration]\}\)
		& \(\expP_2(0;\sigProb,\penetration)\le \TOne\le \expP_2(1-\penetration;\sigProb,\penetration)\)
		& \(\TOne\)
		\\[4pt]
		
		\(3\)
		& \(\regionSet_3\)
		& \(\TOne<\expP\profileInput<\TTwo\paraInput\)
		& \(\{(0,\penetration,0)\}\)
		& \(\TOne<\expP_3\paraInput<\TTwo\paraInput\)
		& \(\expP_3\paraInput\)
		\\[4pt]
		
		\(4\)
		& \(\regionSet_4\)
		& \(\expP\profileInput=\TTwo\paraInput\)
		& \(\{(0,\equiProfile_\groupVU,0):\equiProfile_\groupVU\in[0,\penetration]\}\)
		& \(\expP_4(0;\sigProb,\penetration)\le \TTwo\paraInput\le \expP_4(\penetration;\sigProb,\penetration)\)
		& \(\TTwo\paraInput\)
		\\[4pt]
		
		\(5\)
		& \(\regionSet_5\)
		& \(\expP\profileInput>\TTwo\paraInput\)
		& \(\{(0,0,0)\}\)
		& \(\expP_5>\TTwo\paraInput\)
		& \(\expP_5=\accidentFunc(0)\)
		\\
		\bottomrule
	\end{tabular}
\end{table*}
\begin{figure}[t]
	\begin{center}
		\includegraphics[width=8.4cm]{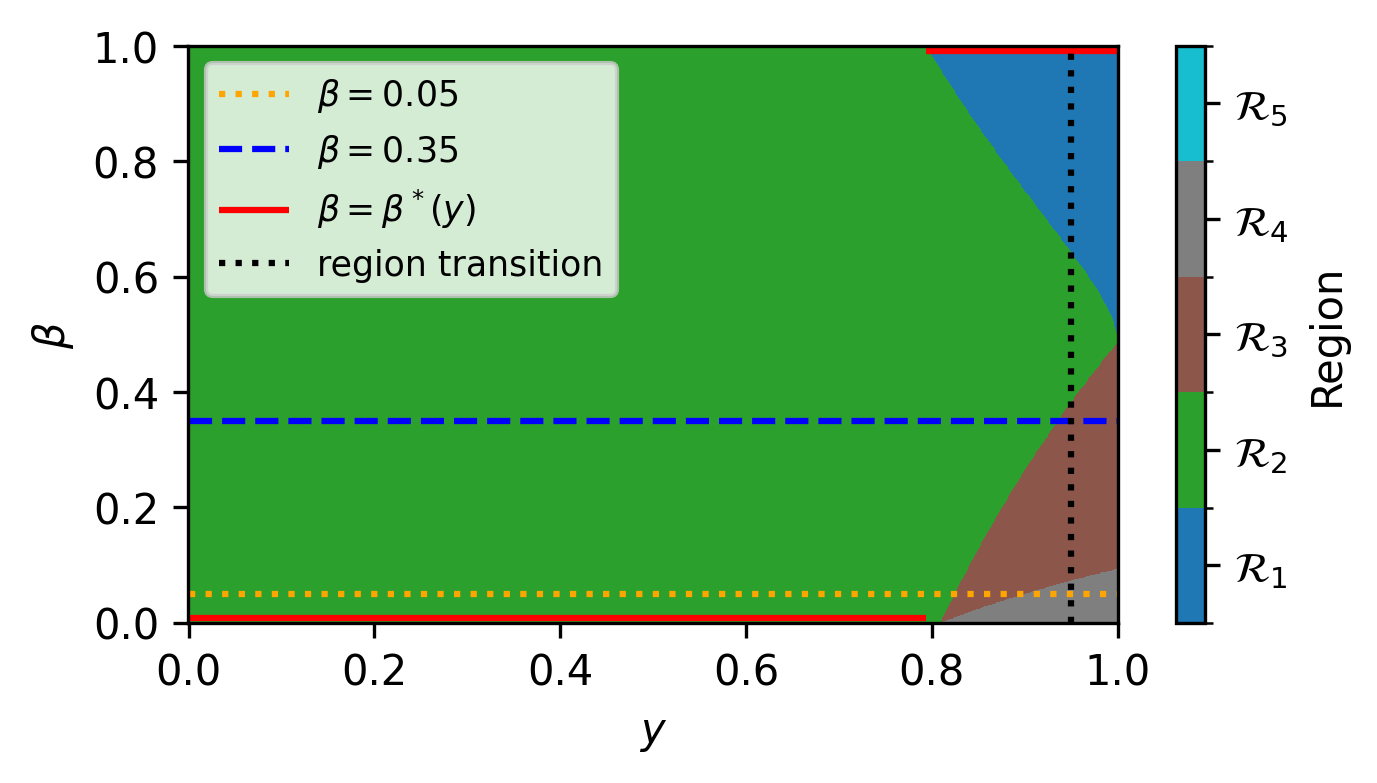}
		\caption{Parameter-region heatmap over \((\sigProb,\penetration)\) using the same model primitives as in Fig.~\ref{fig:policy-accident-diff}. The increasing segments in Fig.~\ref{fig:policy-accident-diff} occur when traces under fixed non-optimal signaling probabilities pass through \(\regionSet_3\) and/or \(\regionSet_4\), while the optimal signaling policy switches to \(\sigProb=1\) only in \(\regionSet_1\). For fixed \(\penetration\), the vertical region transitions illustrate the ordering of regions as \(\sigProb\) increases.}
		\label{fig:policy-accident-diff-heatmap}
	\end{center}
\end{figure}
We now state the two main consequences of the corrected model. First, increased V2V adoption can increase equilibrium accident probability under fixed non-optimal signaling probabilities. Second, this perverse effect disappears when the signaling probability is chosen optimally at each adoption level.

\subsection{Perverse Effect of V2V Adoption}
Under the corrected model, we characterize the equilibrium profiles in Section~\ref{sec:proofs-region-characterization}. Let \(\expPNE\paraInput\) denote the equilibrium accident probability under this characterization. Our first main result shows that increased V2V adoption can increase equilibrium accident probability.
\begin{thm}
	\label{thm:perverse-effect-source}
	There exist model parameters satisfying the assumptions above and a fixed signaling probability \(\sigProb\in[0,1]\) such that the equilibrium accident probability
	\[
	\expPNE(\sigProb,\penetration)
	\]
	is increasing over an interval of V2V adoption levels \(\penetration\).
\end{thm}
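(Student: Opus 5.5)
The plan is to prove the theorem by an explicit construction. I will choose primitives under which, for a fixed \(\sigProb\) and an interval of adoption levels, the equilibrium lies in \(\regionSet_3\) of Table~\ref{tb:region-characterization}. There the equilibrium accident probability is
\[
\expP_3\paraInput=\candProbThreeVal,
\]
and I will show this is strictly increasing in \(\penetration\). The reason to aim for \(\regionSet_3\) is that all V2V drivers drive recklessly when unsignaled, so adding V2V drivers adds reckless mass. The case \(\sigProb=0\) must be avoided, because then \(\TTwo=\TOne\) and \(\regionSet_3\) is empty. I will instead make the signal as informative as possible.

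\textbf{Concrete choice.} Take \(\accidentCost=3\), \(\accidentFunc(z)=z\), \(\detectionFunc\equiv 1\), and \(\sigProb=1\). The map \(\accidentFunc\) is continuous and weakly increasing, and \(\detectionFunc\) is constant, hence continuous and weakly increasing, so the standing assumptions hold. With these choices \(\accidentFunc(0)=0\) and \(\sigProb\detectionFunc(\penetration)=1\). This gives
\[
\TTwo\paraInput=\frac{1}{1+\accidentCost(1-1)}=1,
\qquad
\expP_3\paraInput=\frac{\penetration}{1+\penetration}.
\]
The map \(\penetration\mapsto\penetration/(1+\penetration)\) is strictly increasing. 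The region-3 condition \(\TOne<\expP_3<\TTwo\) reads \(1/4<\penetration/(1+\penetration)<1\), which holds exactly for \(\penetration\in(1/3,1]\). So on this interval \(\expPNE(1,\penetration)=\penetration/(1+\penetration)\) is increasing.

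\textbf{Direct equilibrium check.} To avoid leaning only on the table, I will also verify directly that \(\equiProfile=(0,\penetration,0)\) is an equilibrium. From \eqref{eq:consistency-equation-explicit}, \(\expP=\penetration/(1+\penetration)\), which is a valid probability. Next I compute \(\condProb{\eventAcci}{\eventDisp^c}\) using \(\probability{\eventDisp}=\expP\sigProb\detectionFunc\):
\[
\condProb{\eventAcci}{\eventDisp^c}=\frac{\expP(1-\sigProb\detectionFunc)}{1-\sigProb\detectionFunc\,\expP}=0 .
\]
The three groups then behave as follows.
\begin{itemize}
\item Unsignaled V2V drivers face reckless cost \(0\) against careful cost \(1\), so \(\equiProfile_\groupVU=\penetration\) is optimal.
\item Signaled V2V drivers strictly prefer care, as shown in the model section, so \(\equiProfile_\groupVS=0\).
\item Non-V2V drivers prefer care iff \(\accidentCost\expP\ge 1-\expP\), i.e.\ \(\expP\ge 1/(1+\accidentCost)=1/4\). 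This holds strictly for \(\penetration>1/3\), so \(\equiProfile_\groupN=0\).
\end{itemize}

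\textbf{Main obstacle.} The delicate point is that \(\expPNE\) must be well defined, so no other equilibrium with a different accident probability may coexist on this interval. I would settle this using the region characterization of Section~\ref{sec:proofs-region-characterization}. The threshold statements \(\Tstate_i\) are mutually exclusive orderings of \(\expP\) relative to \(\TOne\le\TTwo\), and each pins down the candidate profile.

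A direct argument also seems available. Since \(\condProb{\eventAcci}{\eventDisp^c}=0\) whenever \(\expP<1\), every equilibrium has \(\equiProfile_\groupVU=\penetration\) and \(\equiProfile_\groupVS=0\). The only freedom left is \(\equiProfile_\groupN\). If \(\equiProfile_\groupN>0\), then \(\expP=(\equiProfile_\groupN+\penetration)/(1+\equiProfile_\groupN+\penetration)>1/4\), so non-V2V drivers strictly prefer care, a contradiction. Hence the equilibrium is unique and the claimed monotonicity follows.

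If one prefers \(\detectionFunc\) and \(\accidentFunc\) that are not degenerate, for instance the primitives of Fig.~\ref{fig:policy-accident-diff}, the same scheme applies. The extra work is checking that \(\expP_3\) remains increasing despite the \(\penetration\)-dependence of the denominator, and that the region-3 inequalities hold on some subinterval.
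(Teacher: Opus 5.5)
Your proof is correct, but it takes a different route from the paper. The paper proves the existence claim only by pointing to the numerical instance of Fig.~\ref{fig:policy-accident-diff} (\(\accidentCost=3.35\), \(\accidentFunc(z)=0.05+0.20\sqrt{z}\), \(\detectionFunc(\penetration)=0.9\sqrt{\penetration}\), fixed \(\sigProb\in\{0.05,0.35\}\)). It then spends its effort on the accompanying localization claim: in \(\regionSet_1\), \(\regionSet_2\), \(\regionSet_5\) the value \(\expPNE\) is constant or weakly decreasing in \(\penetration\), so any increase must come from \(\regionSet_3\) or \(\regionSet_4\).

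You instead give a closed-form instance (\(\accidentCost=3\), \(\accidentFunc(z)=z\), \(\detectionFunc\equiv 1\), \(\sigProb=1\)) with \(\expPNE(1,\penetration)=\penetration/(1+\penetration)\) on \((1/3,1]\). You also verify the equilibrium and its uniqueness directly from the cost comparisons.

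What your route buys is a fully rigorous existence proof that does not depend on reading a plot. Your uniqueness check also makes \(\expPNE\) well defined on the interval, which the paper handles only implicitly through the region characterization. Your increase lies in \(\regionSet_3\), so it is consistent with the paper's localization claim.

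What the paper's instance buys is non-degenerate primitives. Your choice \(\sigProb\detectionFunc=1\) makes the signal fully revealing, so \(\condProb{\eventAcci}{\eventDisp^c}=0\). Your \(\detectionFunc(0)=1\) also means detection with no equipped vehicles. That last feature is harmless: replacing \(\detectionFunc\) by \(\min\{1,3\penetration\}\), which equals \(1\) on \([1/3,1]\), leaves your computation unchanged.

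There is one small slip in your uniqueness step. With \(\equiProfile_\groupVS=0\) you have \(\ds=\equiProfile_\groupN\), not \(0\), so the accident probability is \((\equiProfile_\groupN+\penetration)/(1+\penetration)\), not \((\equiProfile_\groupN+\penetration)/(1+\equiProfile_\groupN+\penetration)\). Both expressions exceed \(\penetration/(1+\penetration)>1/4\) when \(\equiProfile_\groupN>0\), so your contradiction and conclusion are unaffected. The general form \((\equiProfile_\groupN+\equiProfile_\groupVU)/(1+\equiProfile_\groupVU)\) also confirms \(\expP<1\) for every profile with \(\equiProfile_\groupVS=0\), because \(\equiProfile_\groupN\le 1-\penetration<1\). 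That is the condition your argument invokes without checking.
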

The proof further shows that such a local increase can occur only through the parameter regions \(\regionSet_3\) and \(\regionSet_4\), which are defined formally in Section~\ref{sec:proofs-region-characterization}. Fig.~\ref{fig:policy-accident-diff} plots \(\expPNE\) as a function of \(\penetration\) for fixed signaling probabilities \(\sigProb=0.05\) and \(\sigProb=0.35\). The corresponding heatmap in Fig.~\ref{fig:policy-accident-diff-heatmap}, generated using the same primitive parameters, shows that the increasing segments occur in the regions \(\regionSet_3\) and \(\regionSet_4\). Although the increase in this illustrative instance is modest, it demonstrates the mechanism; larger increases can occur for other parameter values.

\subsection{Optimal Signaling Removes the Perverse Effect}
One might expect that increasing V2V adoption always improves road safety. Theorem~\ref{thm:perverse-effect-source} shows that this need not be true under fixed non-optimal signaling probabilities. This raises a natural design question: can the system designer choose the signaling probability at each adoption level so that increased V2V adoption does not make the road less safe? Our second main result answers this question.
\begin{thm}
	\label{thm:no-perverse-optimal}
	For fixed primitives \(\accidentCost,\accidentFunc,\detectionFunc\), define an optimal signaling policy \(\optimalSigP:[0,1]\to[0,1]\) by selecting, for each \(\penetration\in[0,1]\),
	\[
	\optimalSigP(\penetration)
	\in
	\arg\min_{\sigProb\in[0,1]}\expPNE\paraInput.
	\]
	Then the optimized equilibrium accident probability
	\[
	\expPNE(\optimalSigP(\penetration),\penetration)
	\]
	is weakly decreasing in \(\penetration\). 
\end{thm}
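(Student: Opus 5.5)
The plan is to reduce the optimized value \(V(\penetration):=\min_{\sigProb}\expPNE\paraInput\) to a comparison of two simpler quantities, each monotone in \(\penetration\). Throughout I will use the equilibrium-region characterization of Table~\ref{tb:region-characterization} (Section~\ref{sec:proofs-region-characterization}), and in particular the fact that the equilibrium accident probability is unique for each \((\sigProb,\penetration)\).

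\emph{Step 1 (reparametrization and a fixed-point view).} Set \(\effSig:=\sigProb\detectionFunc(\penetration)\), the effective signaling strength. The consistency equation \eqref{eq:consistency-equation-explicit} and the thresholds \(\TOne\), \(\TTwo\paraInput\) depend on \((\sigProb,\penetration)\) only through \(\effSig\), apart from the group sizes \(1-\penetration\) and \(\penetration\). For \(\penetration'>\penetration\), monotonicity of \(\detectionFunc\) gives \(\{\sigProb\detectionFunc(\penetration):\sigProb\in[0,1]\}=[0,\detectionFunc(\penetration)]\subseteq[0,\detectionFunc(\penetration')]\). So any effective strength available at \(\penetration\) is also available at \(\penetration'\), by taking \(\sigProb'=\sigProb\detectionFunc(\penetration)/\detectionFunc(\penetration')\). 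Next, for a candidate value \(P\), let every group best respond to \(P\) according to the thresholds, and define \(F_{\effSig,\penetration}(P)\) as the value of \eqref{eq:consistency-equation-explicit} at those best responses. This is a weakly decreasing correspondence, and \(\expPNE\) is its unique fixed point. Consequently, if \(F_{\effSig,\penetration'}\le F_{\effSig,\penetration}\) pointwise on a range of \(P\) containing the fixed point, the fixed point weakly decreases.

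\emph{Step 2 (dichotomy for every \(\sigProb\)).} I will show that for every \((\sigProb,\penetration)\), at least one of the following holds:
\begin{itemize}
\item[(a)] \(\expPNE\paraInput\ge\expPNE(0,\penetration)\);
\item[(b)] the equilibrium lies in \(\regionSet_1\cup\regionSet_2\).
\end{itemize}
At \(\sigProb=0\) there is no information, so the three groups face the same threshold \(\TOne\). The equilibrium then depends only on the total reckless mass: it equals \(\accidentFunc(1)\), \(\TOne\), or \(\accidentFunc(0)\) according to where \(\TOne\) sits. This baseline \(\expPNE(0,\penetration)\) is therefore independent of \(\penetration\). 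The dichotomy then follows by checking regions. In \(\regionSet_3\), \(\regionSet_4\), \(\regionSet_5\) the value is either \(>\TOne\) or equal to \(\accidentFunc(0)\), while the baseline is at most \(\TOne\) unless it equals \(\accidentFunc(0)\). This gives \(V(\penetration)=\min\{\expPNE(0,\cdot),\,W(\penetration)\}\), where \(W(\penetration)\) is the infimum of \(\expPNE\) over those \(\sigProb\) whose equilibrium lies in \(\regionSet_1\cup\regionSet_2\).

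\emph{Step 3 (monotonicity of \(W\)).} On the range \(P<\TOne\), the best responses are \(\du=1\) and \(\equiProfile_\groupN=1-\penetration\). Hence \(F_{\effSig,\penetration}(P)=\accidentFunc(1)/(1+\effSig(\accidentFunc(1)-\accidentFunc(1-\penetration)))\), which is weakly decreasing in both \(\penetration\) and \(\effSig\); at \(P=\TOne\) the analogous interval statement holds. Using Step 1 with the same \(\effSig\) at \(\penetration'\), an equilibrium in \(\regionSet_1\cup\regionSet_2\) at \(\penetration\) yields an equilibrium at \(\penetration'\) with weakly smaller accident probability, so \(W(\penetration')\le W(\penetration)\). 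Combined with the constant baseline, this shows \(V\) is weakly decreasing.

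\emph{Main obstacle.} The hard part is the last inference of Step 3. When we pass to \(\penetration'\) with the same \(\effSig\), the fixed point could leave \(\regionSet_1\cup\regionSet_2\) and land in \(\regionSet_3\), where \(F\) is increasing in \(\penetration\); this is exactly the perverse mechanism of Theorem~\ref{thm:perverse-effect-source}. I would rule this out by the decreasing-correspondence argument. The value \(F_{\effSig,\penetration'}(\TOne)\) is at most \(F_{\effSig,\penetration}(\TOne)\), and if that is \(\le\TOne\), the fixed point at \(\penetration'\) cannot exceed \(\TOne\). The remaining care is in handling set-valued \(F\) at the thresholds and the case \(\detectionFunc(\penetration')=0\).
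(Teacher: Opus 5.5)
Your Step 2 dichotomy and the $\penetration$-independent no-signaling baseline are correct. Your ``main obstacle'' argument also works for $\regionSet_1$: if $\expP_1(\effSig,\penetration)<\TOne$, then $\max F_{\effSig,\penetration'}(\TOne)=\expP_1(\effSig,\penetration')\le\expP_1(\effSig,\penetration)<\TOne$, so the fixed point at $\penetration'$ stays in $\regionSet_1$ and weakly decreases.

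\textbf{The gap is the $\regionSet_2$ half of Step 3.} At $P=\TOne$ the correspondence is the interval $F_{\effSig,\penetration}(\TOne)=[\expP_3(\effSig,\penetration),\expP_1(\effSig,\penetration)]$, since $\expP_2(0)=\expP_3$ and $\expP_2(1-\penetration)=\expP_1$. Only the upper endpoint decreases in $\penetration$. The lower endpoint $\expP_3=\accidentFunc(\penetration)/(1+\effSig(\accidentFunc(\penetration)-\accidentFunc(0)))$ is weakly \emph{increasing} in $\penetration$, which is exactly the perverse channel. So ``the analogous interval statement'' is false. In $\regionSet_2$ the hypothesis of your fix ($F_{\effSig,\penetration}(\TOne)\le\TOne$) fails, because that set contains $\TOne$. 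At $\penetration'$ the lower endpoint can then cross $\TOne$ and push the fixed point into $\regionSet_3$ or $\regionSet_4$, above $\TOne$.

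Concretely, take $\accidentFunc(z)=z$, $\accidentCost=1.5$ (so $\TOne=0.4$), $\detectionFunc\equiv 0.1$, and $\sigProb=1$, so $\effSig=0.1$.
\begin{itemize}
\item At $\penetration=0.3$: $\expP_3=0.3/1.03<0.4<1/1.03=\expP_1$, so the equilibrium is in $\regionSet_2$ with value $0.4$.
\item At $\penetration'=0.9$ with the same $\effSig$: $\expP_3=0.9/1.09>\TTwo=1/2.35$, so the equilibrium is in $\regionSet_4$ with value $1/2.35>0.4$. Your same-$\effSig$ transport claim therefore fails.
\item Moreover, for every attainable $\effSig\in[0,0.1]$, both $\expP_3(\effSig,0.9)\ge 0.9/1.09>\TOne$ and $\expP_1(\effSig,0.9)\ge 1/1.09>\TOne$. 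So neither $\paraCond_1$ nor $\paraCond_2$ holds at $\penetration'=0.9$, and your $W$ jumps from $W(0.3)=0.4$ to $W(0.9)=+\infty$. $W$ is not weakly decreasing.
\end{itemize}
Relatedly, the comparison principle in Step 1 is stated on the wrong side. For a decreasing correspondence you must control $F_{\effSig,\penetration'}$ at points \emph{above} the old fixed point, and that is precisely where $\expP_3$ enters.

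\textbf{The conclusion survives with a short repair, which mirrors the paper.} By Lemma~\ref{lem:boundedness-of-consistency-equation}, $\paraCond_2$ forces $\accidentFunc(0)\le\expP_3\le\TOne\le\expP_1\le\accidentFunc(1)$. So whenever $\regionSet_2$ is realized, the baseline $\expPNE(0,\penetration)$ already equals $\TOne$, and $\regionSet_2$ never undercuts it. Define $W$ as an infimum over $\regionSet_1$ only. There your transport, with $\sigProb'=\sigProb\detectionFunc(\penetration)/\detectionFunc(\penetration')$, is valid, and $V=\min\{\expPNE(0,\cdot),W\}$ is then weakly decreasing. In the paper's closed form $\min\{\TOne,\expP_1(1,\penetration)\}$, the $\TOne$ branch likewise comes from $\sigProb=0$, not from signaling into $\regionSet_2$. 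The paper reaches that formula via region ordering in $\sigProb$ and endpoint optimality. Your repaired route bypasses both: it needs only the dichotomy, the fact that $\regionSet_1$ is preserved and improved at fixed $\effSig$ as $\penetration$ grows, and the fact that the attainable set $[0,\detectionFunc(\penetration)]$ of $\effSig$ expands with $\penetration$.
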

Fig.~\ref{fig:policy-accident-diff} illustrates this result: the trace under \(\sigProb=\optimalSigP(\penetration)\) is weakly decreasing, while traces under fixed non-optimal signaling probabilities may exhibit increasing segments. Additionally, the trace \(\sigProb=\optimalSigP(\penetration)\) in Fig.~\ref{fig:policy-accident-diff-heatmap} illustrates how the optimal policy is chosen as a function of adoption level. A constructive characterization of such an optimal policy is given in Proposition~\ref{prop:optimal-policy}.

\section{Proofs}
\label{sec:proofs}
By \eqref{eq:old-new-consistency-diff}, the corrected consistency equation generally differs from the original one when \(\accidentFunc\) is nonlinear. Therefore, we establish the equilibrium characterization and endpoint optimality under the corrected model. The proofs are organized in three parts:
\begin{itemize}
	\item First, we verify the well-definedness of \eqref{eq:consistency-equation-explicit} and characterize the equilibrium outcomes generated by the corrected consistency equation.
	\item Second, we analyze the designer's optimization problem over the signaling probability and construct one optimal signaling policy.
	\item Finally, we prove Theorems~\ref{thm:perverse-effect-source} and~\ref{thm:no-perverse-optimal} using these intermediate results.
\end{itemize}

\subsection{Equilibrium Region Characterization}
\label{sec:proofs-region-characterization}
We first collect the notation needed to state the equilibrium region characterization. The two behavioral thresholds are defined by
\[
\TOne=\TOneVal,
\qquad
\TTwo\paraInput=\TTwoVal.
\]
These are the accident probabilities at which the costs of reckless and careful driving are equal for non-V2V drivers and unsignaled V2V drivers, respectively. We also define the following candidate accident probabilities:
\[
\expP_1\paraInput=\candProbOneVal,
\]
\[
\expP_2(\equiProfile_\groupN;\sigProb,\penetration)=\candProbTwoVal,
\]
\[
\expP_3\paraInput=\candProbThreeVal,
\]
\[
\expP_4(\equiProfile_\groupVU;\sigProb,\penetration)=\candProbFourVal,
\]
and
\[
\expP_5=\candProbFiveVal.
\]
Table~\ref{tb:region-characterization} collects the notation used for the equilibrium-region characterization and should be read row-wise. Before the characterization is proved, the table serves as a definition table; after Proposition~\ref{prop:region-characterization}, it becomes the equilibrium characterization table. For each \(i=1,\dots,5\), \(\Tstate_i\profileInput\) is a statement about where the induced accident probability \(\expP\profileInput\) lies relative to the two behavioral thresholds. The set \(\candSet_i\) describes the candidate profile form associated with that threshold case. These sets are called candidate sets because they describe possible equilibrium forms; in strict regions, \(\candSet_i\) is a singleton, while in equality regions, not every member of \(\candSet_i\) is necessarily an equilibrium. The condition \(\paraCond_i\paraInput\) is the parameter-side statement associated with the \(i\)-th threshold case; it is what allows us to determine from \(\paraInput\) alone which parameter region is realized. The associated parameter region is
\[
\regionSet_i := \{\paraInput: \paraCond_i\paraInput\text{ holds}\}.
\]
Thus \(\regionSet_i\) is a subset of \(\paraInput\)-space. This matches region-colored heatmaps, such as Fig.~\ref{fig:policy-accident-diff-heatmap}: each point \(\paraInput\) belongs to one of the parameter regions \(\regionSet_i\). Finally, \(\expPNE\paraInput\) denotes the resulting equilibrium accident probability.

For fixed \(\paraInput\), define
\[
\NE\paraInput
:=
\{\equiProfile:\ \equiProfile \text{ is an equilibrium under }\paraInput\}.
\]
With this notation in place, we can state the equilibrium-region characterization compactly.
\begin{prop}
	\label{prop:region-characterization}
	For fixed \(\paraInput\), the equilibrium characterization is summarized by Table~\ref{tb:region-characterization}. For each \(i\in\{1,\dots,5\}\), first,
	\[
	\equiProfile\in\NE\paraInput
	\text{ and }
	\Tstate_i\profileInput
	\implies
	\equiProfile\in\candSet_i.
	\]
	Thus \(\candSet_i\) exhausts the possible equilibrium profile forms in behavioral threshold case \(\Tstate_i\). Second,
	\[
	\paraCond_i\paraInput
	\iff
	\exists \equiProfile\in\candSet_i
	\text{ s.t. }
	\equiProfile\in\NE\paraInput
	\text{ and }
	\Tstate_i\profileInput.
	\]
	Equivalently, \(\regionSet_i\) is the set of parameter values for which some equilibrium realizes behavioral threshold case \(\Tstate_i\). The final column of Table~\ref{tb:region-characterization} gives the corresponding equilibrium accident probability.
\end{prop}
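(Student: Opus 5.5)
The plan is to reduce each group's best response to a threshold condition on the scalar \(\expP\profileInput\), and then to treat the five threshold cases one at a time.

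\emph{Step 1 (preliminaries).} Signaled V2V drivers strictly prefer \(\actionCare\), so every equilibrium has \(\equiProfile_\groupVS=0\) and hence \(\ds=\equiProfile_\groupN\le\du\). I will check that \eqref{eq:consistency-equation-explicit} is well defined on such profiles. The denominator is \(1+\sigProb\detectionFunc(\penetration)(\accidentFunc(\du)-\accidentFunc(\ds))\ge 1\), and \(0\le\expP\profileInput\le\accidentFunc(\du)\le 1\).

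\emph{Step 2 (thresholds).} For non-V2V drivers the comparison is \(\accidentCost\expP\le 1-\expP\), which is equivalent to \(\expP\le\TOne\). For unsignaled V2V drivers I compute the posterior
\[
\condProb{\eventAcci}{\eventDisp^c}=\frac{\expP(1-\sigProb\detectionFunc(\penetration))}{1-\expP\sigProb\detectionFunc(\penetration)}.
\]
The comparison \(\accidentCost\condProb{\eventAcci}{\eventDisp^c}\le 1-\condProb{\eventAcci}{\eventDisp^c}\) then rearranges to \(\expP\bigl(1+\accidentCost(1-\sigProb\detectionFunc(\penetration))\bigr)\le 1\), that is, \(\expP\le\TTwo\paraInput\). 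I will note that \(\TOne\le\TTwo\paraInput\), with equality iff \(\sigProb\detectionFunc(\penetration)=0\). The degenerate case \(\expP\sigProb\detectionFunc(\penetration)=1\), in which \(\eventDisp^c\) is null, has to be excluded separately. I expect \(\expP\le\accidentFunc(\du)\) together with \(\accidentCost>1\) to rule it out, or at least to make it harmless.

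\emph{Step 3 (first claim of the proposition).} The five cases \(\Tstate_i\) partition the possible values of \(\expP\).
\begin{itemize}
\item If \(\expP<\TOne\), then \(\expP<\TTwo\) as well, so both groups are strictly reckless. This gives \((1-\penetration,\penetration,0)\), with \(\du=1\), \(\ds=1-\penetration\), and therefore \(\expP=\expP_1\).
\item If \(\expP=\TOne<\TTwo\), the unsignaled V2V drivers are fully reckless and \(\equiProfile_\groupN\) is free. This gives \(\candSet_2\) and \(\expP=\expP_2(\equiProfile_\groupN)\).
\item If \(\TOne<\expP<\TTwo\), the only equilibrium form is \((0,\penetration,0)\), and \(\expP=\expP_3\).
\item If \(\expP=\TTwo\), then \(\expP>\TOne\) whenever the thresholds differ, so \(\equiProfile_\groupN=0\). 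This gives \(\candSet_4\) with \(\expP=\expP_4(\equiProfile_\groupVU)\).
\item If \(\expP>\TTwo\), everyone is careful and \(\expP=\accidentFunc(0)\).
\end{itemize}
When \(\sigProb\detectionFunc(\penetration)=0\), cases 2 and 4 coincide. There \(\TOne=\TTwo\), and I will need to check that the stated candidate sets still exhaust the equilibria, or else adjust the bookkeeping for that case.

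\emph{Step 4 (second claim).} For the strict regions, \(\paraCond_i\) simply says that the unique candidate's induced probability satisfies \(\Tstate_i\), so the equivalence is immediate. For the equality regions I will use monotonicity and continuity. Set \(c=\sigProb\detectionFunc(\penetration)\), \(u=\accidentFunc(\du)\) and \(s=\accidentFunc(\ds)\), so that \(\expP=u/(1+c(u-s))\). Then
\[
\partial_u\expP=\frac{1-cs}{(1+c(u-s))^2}\ge 0,
\qquad
\partial_s\expP\ge 0.
\]
Since \(u\) and \(s\) are weakly increasing in \(\equiProfile_\groupN\) (for \(\expP_2\)) and in \(\equiProfile_\groupVU\) (for \(\expP_4\), where \(s=\accidentFunc(0)\) is fixed), both \(\expP_2\) and \(\expP_4\) are continuous and weakly increasing. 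The intermediate value theorem then gives the direction \(\paraCond_i\Rightarrow\) existence of an equilibrium, and monotonicity gives the converse. The final column of the table follows by reading off \(\expP\) in each case.

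I expect the main obstacle to be the careful treatment of the boundary and degenerate situations in Steps 2–4, namely \(\sigProb\detectionFunc(\penetration)=0\), \(\penetration\in\{0,1\}\), and a possibly null \(\eventDisp^c\), rather than the generic case analysis.
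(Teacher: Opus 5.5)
Your proposal is correct and follows the paper's proof essentially step for step. You reduce each group's cost comparison to a comparison of \(\expP\) with \(\TOne\) or \(\TTwo\) through the posterior on \(\eventDisp^c\), read off the candidate forms case by case, and handle the equality rows by continuity, monotonicity of \(\expP_2,\expP_4\), and the intermediate value theorem (your bound \(\partial_u\expP=(1-cs)/(1+c(u-s))^2\ge 0\) just replaces the difference-quotient argument of Lemma~\ref{lem:candidate-probability-monotonicity}); two of the loose ends you flagged resolve as follows. When \(\sigProb\detectionFunc(\penetration)=0\) the sets \(\candSet_2,\candSet_4\) in general do fail to exhaust the equilibria, since every \((\equiProfile_\groupN,\equiProfile_\groupVU,0)\) with \(\accidentFunc(\equiProfile_\groupN+\equiProfile_\groupVU)=\TOne\) is one, so you must merge rows 2 and 4 as the paper does, which leaves the \(\expPNE\) column unaffected; and \(\accidentCost>1\) does not rule out a null \(\eventDisp^c\), but in an equilibrium it forces \(\sigProb\detectionFunc(\penetration)=1\) and \(\expP=1\), hence \(\equiProfile_\groupN=0\) and \(\accidentFunc(0)=1\), so the profile lies harmlessly in row 4 with \(\expP=\TTwo=1\).
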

Before proving the characterization, we verify that the corrected consistency equation \eqref{eq:consistency-equation-explicit} behaves properly on the candidate profile sets appearing in Table~\ref{tb:region-characterization}.
\begin{lem}
	\label{lem:boundedness-of-consistency-equation}
	For every candidate profile \(\equiProfile\in\candSet_i\), \(i=1,\dots,5\), the explicit consistency expression \eqref{eq:consistency-equation-explicit} is well-defined and satisfies \(\expP\profileInput\in[0,1]\). Moreover,
	\[
	\accidentFunc(0)
	\leq
	\expP\profileInput
	\leq
	\accidentFunc(1).
	\]
\end{lem}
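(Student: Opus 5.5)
Every candidate profile in the sets \(\candSet_1,\dots,\candSet_5\) of Table~\ref{tb:region-characterization} has \(\equiProfile_\groupVS=0\). Hence \(\ds=\equiProfile_\groupN\) and \(\du=\equiProfile_\groupN+\equiProfile_\groupVU\ge \ds\). Moreover \(\du\le(1-\penetration)+\penetration=1\), so both \(\du\) and \(\ds\) lie in \([0,1]\), the domain of \(\accidentFunc\). The plan is to set \(\Delta:=\accidentFunc(\du)-\accidentFunc(\ds)\) and \(k:=\sigProb\detectionFunc(\penetration)\in[0,1]\), and then write \eqref{eq:consistency-equation-explicit} as
\[
\expP\profileInput=\frac{\accidentFunc(\du)}{1+k\Delta}.
\]
All claims of the lemma then reduce to elementary inequalities in \(k\), \(\Delta\), and the values of \(\accidentFunc\). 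Note that the expressions \(\expP_1,\dots,\expP_5\) are exactly this formula evaluated at the respective candidate forms, so nothing case-specific is needed beyond the observation \(\equiProfile_\groupVS=0\).

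\textbf{Step 1 (well-definedness).} Since \(\du\ge\ds\) and \(\accidentFunc\) is weakly increasing, \(\Delta\ge 0\). With \(k\ge0\), the denominator satisfies \(1+k\Delta\ge1>0\), so the expression is well-defined.

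\textbf{Step 2 (upper bound).} The denominator is at least \(1\), so
\[
\expP\profileInput\le\accidentFunc(\du)\le\accidentFunc(1)\le1,
\]
using monotonicity of \(\accidentFunc\) and \(\du\le1\).

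\textbf{Step 3 (lower bound).} I will show \(\accidentFunc(\du)\ge\accidentFunc(0)(1+k\Delta)\). This is where a little care is needed, since the denominator can exceed \(1\). Because \(k\le1\) and \(\Delta\ge0\), we have \(1+k\Delta\le 1+\Delta\). It therefore suffices to prove
\[
\accidentFunc(\du)\ge\accidentFunc(0)\bigl(1+\accidentFunc(\du)-\accidentFunc(\ds)\bigr).
\]
Rearranging, this is equivalent to
\[
\accidentFunc(\du)\bigl(1-\accidentFunc(0)\bigr)\ge\accidentFunc(0)\bigl(1-\accidentFunc(\ds)\bigr).
\]
Both factors on each side are nonnegative because \(\accidentFunc\) takes values in \([0,1]\). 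The inequality then follows from \(\accidentFunc(\du)\ge\accidentFunc(0)\) and \(1-\accidentFunc(0)\ge1-\accidentFunc(\ds)\), which is the product of two pairwise inequalities between nonnegative quantities. Consequently \(\expP\profileInput\ge\accidentFunc(0)\ge0\).

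Together, Steps 1–3 give membership in \([0,1]\) and the sandwich \(\accidentFunc(0)\le\expP\profileInput\le\accidentFunc(1)\). The only nontrivial step is the lower bound in Step 3, which is handled by the termwise product comparison above.
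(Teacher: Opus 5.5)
Your proof is correct. Steps~1 and~2 match the paper. The lower bound is where you take a different route.

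\textbf{The paper's route.} Having shown \(\expP\profileInput\in[0,1]\), the paper does not manipulate the explicit fraction further. It observes that \(\probability{\eventDisp}=\effSig\,\expP\profileInput\in[0,1]\). By the implicit equation \eqref{eq:consistency-equation-new}, \(\expP\profileInput\) is then a convex combination of \(\accidentFunc(\du)\) and \(\accidentFunc(\ds)\). This gives the sharper sandwich \(\accidentFunc(\ds)\le\expP\profileInput\le\accidentFunc(\du)\), and monotonicity of \(\accidentFunc\) finishes the argument.

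\textbf{Your route.} You compare directly with \(\accidentFunc(0)\), first relaxing \(\effSig\le 1\) and then using a termwise product inequality between nonnegative quantities.

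\textbf{What each buys.} Your argument is self-contained algebra on the explicit formula and never returns to the implicit equation. The paper's argument is more conceptual and yields the intermediate bound \(\expP\profileInput\ge\accidentFunc(\ds)\), which is stronger than \(\expP\profileInput\ge\accidentFunc(0)\).

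You could also obtain that sharper bound inside your own framework with less work. Note that \(\accidentFunc(\du)-\accidentFunc(\ds)(1+\effSig\Delta)=\Delta\bigl(1-\effSig\,\accidentFunc(\ds)\bigr)\ge 0\). This identity is exactly the algebraic content of the paper's convex-combination step, and it avoids both the relaxation \(\effSig\le 1\) and the product comparison.
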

\begin{pf}[Lemma~\ref{lem:boundedness-of-consistency-equation}]
	The candidate profile sets in Table~\ref{tb:region-characterization} all satisfy \(\equiProfile_\groupVS=0\).
	Therefore, for every \(\equiProfile\in\candSet_i\), \(i=1,\dots,5\), we have
	\[
	\du
	=
	\equiProfile_\groupN+\equiProfile_\groupVU
	\ge
	\equiProfile_\groupN+\equiProfile_\groupVS
	=
	\ds.
	\]
	By weak monotonicity of \(\accidentFunc\), \(\accidentFunc(\du)\ge \accidentFunc(\ds)\).
	Hence the denominator in \eqref{eq:consistency-equation-explicit} satisfies \[
	1+\sigProb\detectionFunc(\penetration)
	\bigl(\accidentFunc(\du)-\accidentFunc(\ds)\bigr)
	\ge 1.
	\]
	Thus the explicit expression is well-defined on every candidate profile set. Moreover,
	\[
	0
	\le
	\expP\profileInput
	=
	\frac{\accidentFunc(\du)}
	{1+\sigProb\detectionFunc(\penetration)
		\bigl(\accidentFunc(\du)-\accidentFunc(\ds)\bigr)}
	\le
	\accidentFunc(\du)
	\le 1.
	\]
	Therefore, \(\expP\profileInput\in[0,1]\). From \eqref{eq:signal-display-probability}, it follows that the signal-display probability is also valid, \(\probability{\eventDisp}\in [0, 1]\). Now using the implicit consistency equation \eqref{eq:consistency-equation-new}, we have
	\[
	\expP\profileInput
	=
	\bigl(1-\probability{\eventDisp}\bigr)\accidentFunc(\du)
	+
	\probability{\eventDisp}\accidentFunc(\ds).
	\]
	Since \(\probability{\eventDisp}\in[0,1]\), this is a convex combination of \(\accidentFunc(\du)\) and \(\accidentFunc(\ds)\). Hence
	\[
	\min\{\accidentFunc(\du),\accidentFunc(\ds)\}
	\le
	\expP\profileInput
	\le
	\max\{\accidentFunc(\du),\accidentFunc(\ds)\}.
	\]
	Because \(\du,\ds\in[0,1]\) and \(\accidentFunc\) is weakly increasing,
	\[
	\accidentFunc(0)
	\le
	\expP\profileInput
	\le
	\accidentFunc(1).
	\]
	This completes the proof.
\end{pf}
We also need a monotonicity property for the candidate accident probabilities \(\expP_2(\equiProfile_\groupN)\) and \(\expP_4(\equiProfile_\groupVU)\).
\begin{lem}
	\label{lem:candidate-probability-monotonicity}
	Fix \(\paraInput\). The functions \(\expP_2(\equiProfile_\groupN)\) and \(\expP_4(\equiProfile_\groupVU)\) are weakly increasing in \(\equiProfile_\groupN\in[0,1-\penetration]\) and \(\equiProfile_\groupVU\in[0,\penetration]\), respectively.
\end{lem}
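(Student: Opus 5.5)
Both candidates have the same algebraic form. With $\effSig:=\sigProb\detectionFunc(\penetration)\in[0,1]$, a constant for fixed $\paraInput$, we have
\[
\expP_2(\equiProfile_\groupN)=h\bigl(\accidentFunc(\equiProfile_\groupN+\penetration),\accidentFunc(\equiProfile_\groupN)\bigr),
\qquad
\expP_4(\equiProfile_\groupVU)=h\bigl(\accidentFunc(\equiProfile_\groupVU),\accidentFunc(0)\bigr),
\]
where
\[
h(\utilityU,\utilityV):=\frac{\utilityU}{1+\effSig(\utilityU-\utilityV)},
\]
defined on the set $\{0\le \utilityV\le \utilityU\le 1\}$. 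By the argument of Lemma~\ref{lem:boundedness-of-consistency-equation}, the denominator is at least $1$ on this set, so $h$ is well defined. The plan is to prove that $h$ is weakly increasing in each argument separately on this set, and then compose with the weakly increasing function $\accidentFunc$.

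The monotonicity of $h$ is a direct computation. For $\utilityU$,
\[
\frac{\partial h}{\partial \utilityU}=\frac{1+\effSig(\utilityU-\utilityV)-\effSig \utilityU}{\bigl(1+\effSig(\utilityU-\utilityV)\bigr)^2}=\frac{1-\effSig \utilityV}{\bigl(1+\effSig(\utilityU-\utilityV)\bigr)^2}\ge 0,
\]
because $\effSig\le 1$ and $\utilityV\le 1$. For $\utilityV$, the dependence enters only through the denominator, which decreases in $\utilityV$ when $\effSig\ge 0$. Since $\utilityU\ge 0$, it follows that $h$ is weakly increasing in $\utilityV$. To avoid relying on differentiability of $\accidentFunc$, I will phrase these facts as monotonicity of $h$ along segments inside the domain, rather than as derivatives of the composed function. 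An equivalent route is to cross-multiply $h(\utilityU',\utilityV')\ge h(\utilityU,\utilityV)$ and reduce it to $\utilityU'-\utilityU\ge \effSig(\utilityU'\utilityV-\utilityU\utilityV')$, which holds under the stated orderings.

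For $\expP_4$, take $\equiProfile_\groupVU\le\equiProfile_\groupVU'$. Then $\utilityU=\accidentFunc(\equiProfile_\groupVU)\le\utilityU'=\accidentFunc(\equiProfile_\groupVU')$, while $\utilityV=\accidentFunc(0)$ stays fixed. Since $\accidentFunc(0)\le \utilityU\le\utilityU'$, the whole segment stays in the domain, and monotonicity of $h$ in $\utilityU$ gives the claim.

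For $\expP_2$, take $\equiProfile_\groupN\le \equiProfile_\groupN'$. Here both arguments increase: $\utilityU=\accidentFunc(\equiProfile_\groupN+\penetration)\le \utilityU'$ and $\utilityV=\accidentFunc(\equiProfile_\groupN)\le\utilityV'$. I will move in two steps, first raising $\utilityU$ to $\utilityU'$ with $\utilityV$ held fixed, then raising $\utilityV$ to $\utilityV'$ with $\utilityU'$ held fixed. Along the first leg $\utilityV\le \utilityU\le \utilityU'$, and along the second leg $\utilityV\le \utilityV'\le \utilityU'$, the last inequality because $\equiProfile_\groupN'\le \equiProfile_\groupN'+\penetration$. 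So both legs remain in the domain, and each leg weakly increases $h$.

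The main obstacle is the $\utilityU$-monotonicity step for $\expP_2$, where the numerator and the denominator increase together. The bound $\effSig \utilityV\le 1$ resolves it, and this is exactly where the hypothesis $\sigProb,\detectionFunc\in[0,1]$ together with $\accidentFunc\le 1$ is used.
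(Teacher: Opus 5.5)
Your proposal is correct and follows essentially the same route as the paper. The paper introduces the same auxiliary function $F(\utilityU,\utilityV)=\utilityU/\bigl(1+\effSig(\utilityU-\utilityV)\bigr)$, proves monotonicity in $\utilityU$ via the factor $1-\effSig\utilityV\ge 0$ (as an explicit difference quotient rather than your derivative, which is equivalent since you differentiate $h$ in real variables, not $\accidentFunc$) and in $\utilityV$ via the shrinking denominator, and then chains $F(\utilityU_2,\utilityV_2)\ge F(\utilityU_2,\utilityV_1)\ge F(\utilityU_1,\utilityV_1)$ in exactly your two-leg order so that each leg stays in the region $\utilityU\ge\utilityV$.
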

\begin{pf}[Lemma~\ref{lem:candidate-probability-monotonicity}]
	Fix \(\paraInput\), and let \(\effSig=\sigProb\detectionFunc(\penetration)\). Define
	\[
	F(\utilityU,\utilityV):=\frac{\utilityU}{1+\effSig(\utilityU-\utilityV)}.
	\]
	On the candidate domain considered below, we have \(\utilityU\geq \utilityV\), so the denominator is at least \(1\). We first show how \(F\) changes when one argument is varied while the other is fixed. For fixed \(\utilityV\), suppose \(\utilityU_2\ge \utilityU_1\ge \utilityV\). Then
	\[
	F(\utilityU_2,\utilityV)-F(\utilityU_1,\utilityV)
	=
	\frac{(\utilityU_2-\utilityU_1)(1-\effSig \utilityV)}
	{\bigl(1+\effSig(\utilityU_2-\utilityV)\bigr)
		\bigl(1+\effSig(\utilityU_1-\utilityV)\bigr)}
	\geq 0.
	\]
	Thus \(F\) is weakly increasing in \(\utilityU\) whenever the fixed \(\utilityV\) is no larger than the smaller \(\utilityU\). For fixed \(\utilityU\), if \(\utilityV_2\geq \utilityV_1\), then
	\[
	1+\effSig(\utilityU-\utilityV_2)
	\leq
	1+\effSig(\utilityU-\utilityV_1).
	\]
	Hence \(F(\utilityU,\utilityV_2)\ge F(\utilityU,\utilityV_1)\), so \(F\) is weakly increasing in \(\utilityV\). Chaining these two facts in the safe order, if \(\utilityU_2\geq \utilityU_1\geq \utilityV_1\) and \(\utilityV_2\geq \utilityV_1\), then
	\[
	F(\utilityU_2,\utilityV_2)
	\geq
	F(\utilityU_2,\utilityV_1)
	\geq
	F(\utilityU_1,\utilityV_1).
	\]
	For \(\expP_2\), take \(\utilityU=\accidentFunc(\equiProfile_\groupN+\penetration)\) and \(\utilityV=\accidentFunc(\equiProfile_\groupN)\). If \(\equiProfile_{\groupN,2}\geq \equiProfile_{\groupN,1}\), then weak monotonicity of \(\accidentFunc\) gives \(\utilityU_2\geq \utilityU_1\) and \(\utilityV_2\geq \utilityV_1\). Also, since \(\equiProfile_{\groupN,1}+\penetration\geq\equiProfile_{\groupN,1}\), we have \(\utilityU_1\geq\utilityV_1\). Therefore the chained inequality above implies
	\[
	\expP_2(\equiProfile_{\groupN,2};\sigProb,\penetration)
	\geq
	\expP_2(\equiProfile_{\groupN,1};\sigProb,\penetration).
	\]
	Thus \(\expP_2\) is weakly increasing in \(\equiProfile_\groupN\). The argument for \(\expP_4\) is identical by taking \(\utilityU=\accidentFunc(\equiProfile_\groupVU)\) and \(\utilityV=\accidentFunc(0)\), so \(\expP_4\) is weakly increasing in \(\equiProfile_\groupVU\), and this completes the proof.
\end{pf}
We next record several threshold relations used throughout the equilibrium characterization proof. Since \(\sigProb\detectionFunc(\penetration)\in[0,1]\), we have
\begin{equation}
	\label{eq:threshold-relations}
	\TOne=\TOneVal\leq \TTwoVal=\TTwo\paraInput.
\end{equation}
Moreover, using \eqref{eq:signal-display-probability}, whenever \(\probability{\eventDisp^c}>0\), we have
\[
\condProb{\eventAcci}{\eventDisp^c}
=
\frac{
	\expP\profileInput(1-\sigProb\detectionFunc(\penetration))
}{
	1-\expP\profileInput\sigProb\detectionFunc(\penetration)
}.
\]
Therefore,
\begin{equation}
	\label{eq:threshold-relations-equivalence}
	\condProb{\eventAcci}{\eventDisp^c}
	\lesseqqgtr
	\TOne
	\iff
	\expP\profileInput
	\lesseqqgtr
	\TTwo\paraInput.
\end{equation}
We now combine the consistency-equation properties and threshold relations to prove the equilibrium-region characterization.
\begin{pf}[Proposition~\ref{prop:region-characterization}]
	The proof is row-wise. We first discuss the boundary case \(\sigProb\detectionFunc(\penetration)=0\). In this case, \(\TOne=\TTwo\paraInput\), so the two equality threshold cases coincide and the open interval case
	\[
	\TOne<\expP\profileInput<\TTwo\paraInput
	\]
	is empty. Thus the five-row characterization collapses to the corresponding three threshold possibilities. This does not affect the equilibrium accident probability, since the two equality rows assign the same value
	\[
	\expPNE\paraInput=\TOne=\TTwo\paraInput.
	\]
	Equivalently, when \(\sigProb\detectionFunc(\penetration)=0\), non-V2V drivers and unsignaled V2V drivers face the same behavioral threshold, so the equality-region logic below applies with the two equality cases merged. Therefore, for the representative proof below, we focus on the non-boundary case \(\sigProb\detectionFunc(\penetration)>0\), where \(\TOne<\TTwo\paraInput\) by \eqref{eq:threshold-relations}. Under this condition, the five behavioral threshold cases are mutually exclusive and exhaustive: every induced accident probability must fall below, equal to, between, equal to, or above the two thresholds \(\TOne\) and \(\TTwo\paraInput\).
	
	We prove the representative equality case \(i=4\) in detail. The argument for \(i=2\) is analogous, with \(\expP_2\), \(\equiProfile_\groupN\), and \(\TOne\) replacing \(\expP_4\), \(\equiProfile_\groupVU\), and \(\TTwo\paraInput\). The strict cases \(i=1,3,5\) follow by the same logic, with the simplification that their candidate sets are singletons. First, we show 
	\[
	\equiProfile\in\NE\paraInput \text{ and } \Tstate_4\profileInput \implies \equiProfile\in\candSet_4.
	\]
	By \eqref{eq:threshold-relations-equivalence} and Table~\ref{tb:expected-cost}, \(\Tstate_4\profileInput\) implies \(\expC_\groupVU(\equiProfile;\actionCare)=\expC_\groupVU(\equiProfile;\actionRec)\). Hence any reckless mass \(\equiProfile_\groupVU\in[0,\penetration]\) is compatible with the equilibrium inequalities for this group; the exact mass is not pinned down by the cost comparison alone. Since \(\sigProb\detectionFunc(\penetration)>0\), \(\Tstate_4\profileInput\) and \eqref{eq:threshold-relations} imply \(\expP\profileInput=\TTwo\paraInput>\TOne\). Rearranging this inequality and using Table~\ref{tb:expected-cost}, we obtain \(\expC_\groupN(\equiProfile;\actionRec)>\expC_\groupN(\equiProfile;\actionCare)\). Since \(\equiProfile\in\NE\paraInput\), the contrapositive of \eqref{eq:exp-cost-reckless-optimal} gives \(\equiProfile_\groupN=0\). Since \(\condProb{\eventAcci}{\eventDisp}=1\), we always have \(\expC_\groupVS(\equiProfile;\actionRec)>\expC_\groupVS(\equiProfile;\actionCare)\), so the contrapositive of \eqref{eq:exp-cost-reckless-optimal} gives \(\equiProfile_\groupVS=0\). Therefore any equilibrium satisfying \(\Tstate_4\profileInput\) has the candidate form
	\[
	\equiProfile=(0,\equiProfile_\groupVU,0),
	\qquad
	\equiProfile_\groupVU\in[0,\penetration],
	\]
	so \(\equiProfile\in\candSet_4\), and \(\candSet_4\) exhausts all possible equilibrium profiles in this region. It remains to prove the parameter-condition equivalence for this row:
	\[
	\paraCond_4\paraInput
	\iff
	\exists \equiProfile\in\candSet_4
	\text{ s.t. }
	\equiProfile\in\NE\paraInput
	\text{ and }
	\Tstate_4\profileInput.
	\]
	Suppose first that there exists \(\equiProfile\in\candSet_4\) such that \(\equiProfile\in\NE\paraInput\) and \(\Tstate_4\profileInput\). Since \(\equiProfile\in\candSet_4\), we can write \(\equiProfile=(0,\equiProfile_\groupVU,0)\) for some \(\equiProfile_\groupVU\in[0,\penetration]\). Substituting this form into the explicit consistency equation \eqref{eq:consistency-equation-explicit} gives
	\[
	\expP\profileInput
	=
	\expP_4(\equiProfile_\groupVU;\sigProb,\penetration).
	\]
	Because \(\Tstate_4\profileInput\) holds,
	\[
	\expP_4(\equiProfile_\groupVU;\sigProb,\penetration)
	=
	\expP\profileInput
	=
	\TTwo\paraInput.
	\]
	By Lemma~\ref{lem:candidate-probability-monotonicity}, \(\expP_4\) is weakly increasing in \(\equiProfile_\groupVU\). Therefore
	\[
	\expP_4(0;\sigProb,\penetration)
	\le
	\TTwo\paraInput
	\le
	\expP_4(\penetration;\sigProb,\penetration),
	\]
	which is exactly \(\paraCond_4\paraInput\). Conversely, suppose \(\paraCond_4\paraInput\) holds. Since \(\accidentFunc\) is continuous, \(\expP_4\) is continuous in \(\equiProfile_\groupVU\). Therefore, by the Intermediate Value Theorem, there exists \(\equiProfile_\groupVU^*\in[0,\penetration]\) such that
	\[
	\expP_4(\equiProfile_\groupVU^*;\sigProb,\penetration)
	=
	\TTwo\paraInput.
	\]
	Construct \(\equiProfile^*:=(0,\equiProfile_\groupVU^*,0)\in\candSet_4\). Then
	\[
	\expP(\equiProfile^*;\sigProb,\penetration)
	=
	\expP_4(\equiProfile_\groupVU^*;\sigProb,\penetration)
	=
	\TTwo\paraInput,
	\]
	so \(\Tstate_4(\equiProfile^*;\sigProb,\penetration)\) holds. Using \(\Tstate_4(\equiProfile^*;\sigProb,\penetration)\), \eqref{eq:threshold-relations}, \eqref{eq:threshold-relations-equivalence}, and Table~\ref{tb:expected-cost}, we obtain \(\expC_\groupVU(\equiProfile^*;\actionCare)=\expC_\groupVU(\equiProfile^*;\actionRec)\), \(\expC_\groupN(\equiProfile^*;\actionRec)>\expC_\groupN(\equiProfile^*;\actionCare)\), and \(\expC_\groupVS(\equiProfile^*;\actionRec)>\expC_\groupVS(\equiProfile^*;\actionCare)\). Since \(\equiProfile^*=(0,\equiProfile_\groupVU^*,0)\), the constructed profile satisfies the equilibrium inequalities \eqref{eq:exp-cost-reckless-optimal} and \eqref{eq:exp-cost-careful-optimal}. Thus \(\equiProfile^*\in\NE\paraInput\), completing the proof.
\end{pf}
\begin{figure}[t]
	\centering
	\resizebox{8.4cm}{!}{
		\begin{tikzpicture}[>=Stealth]
			
			\draw[->] (0,0) -- (10,0) node[right] {$\expP$};
			
			\draw[thick] (0,0.12) -- (0,-0.12);
			\node[below, font=\small] at (0,-0.12) {$0$};
			\draw[thick] (9.7,0.12) -- (9.7,-0.12);
			\node[below, font=\small] at (9.7,-0.12) {$1$};
			
			\draw[thick] (2,0.12) -- (2,-0.12);
			\node[below, font=\small] at (2,-0.12) {$\expP_5=\accidentFunc(0)$};
			\draw[thick] (9,0.12) -- (9,-0.12);
			\node[below, font=\small] at (9,-0.12) {$\accidentFunc(1)$};
			
			\draw[thick] (4,0.12) -- (4,-0.12);
			\node[below, font=\small] at (4,-0.12) {$\TOne$};
			
			\draw[thick] (5.5,0.12) -- (5.5,-0.12);
			\node[above, font=\small] (Ttwo) at (5.5,0.12) {$\TTwo(\effSig)$};
			\draw[->] ($(Ttwo.north)+(-0.5,0.15)$)--($(Ttwo.north)+(0.5,0.15)$);
			\node[above, font=\small] at ($(Ttwo.north)+(0,0.15)$) {$\uparrow \effSig$};
			
			
			\draw[thick] (7,0.12) -- (7,-0.12);
			\node[above, font=\small] (P3) at (7,0.12) {$\expP_3(\effSig)$};
			\draw[<-] ($(P3.north)+(-0.5,0.15)$)--($(P3.north)+(0.5,0.15)$);
			\node[above, font=\small] at ($(P3.north)+(0,0.15)$) {$\uparrow \effSig$};
			
			\draw[thick] (8.5,0.12) -- (8.5,-0.12);
			\node[above, font=\small] (P1) at (8.5,0.12) {$\expP_1(\effSig)$};
			\draw[<-] ($(P1.north)+(-0.5,0.15)$)--($(P1.north)+(0.5,0.15)$);
			\node[above, font=\small] at ($(P1.north)+(0,0.15)$) {$\uparrow \effSig$};
			
			\draw[decorate, decoration={brace, mirror, amplitude=4pt}]
			(2.2,-0.75) -- (3.8,-0.75)
			node[midway, below=6pt, font=\small] {$\regionSet_1$};
			
			\fill (4,-0.75) circle (1.5pt);
			\node[below=5pt, font=\small] at (4,-0.75) {($\regionSet_2$)};
			
			\draw[decorate, decoration={brace, mirror, amplitude=4pt}]
			(4.2,-0.75) -- (5.3,-0.75)
			node[midway, below=6pt, font=\small] {$\regionSet_3$};
			
			\fill (5.5,-0.75) circle (1.5pt);
			\node[below=5pt, font=\small] at (5.5,-0.75) {($\regionSet_4$)};
			
			\draw[decorate, decoration={brace, mirror, amplitude=4pt}]
			(5.7,-0.75) -- (8.8,-0.75)
			node[midway, below=6pt, font=\small] {$\regionSet_5$};
		\end{tikzpicture}
	}
	\caption{Schematic ordering of thresholds and candidate accident probabilities as \(\effSig\) increases.}
	\label{fig:p-line}
\end{figure}

\subsection{Optimal Signaling}
\label{sec:proofs-optimal-signaling-policy}
We next turn from equilibrium characterization to the designer's problem of choosing the signaling probability. The next proposition gives one explicit optimal signaling policy, using a tie-breaking rule in favor of no signaling.
\begin{prop}
	\label{prop:optimal-policy}
	Suppose \(\accidentFunc\) and \(\detectionFunc\) are strictly increasing. With ties broken in favor of no signaling, one such optimal signaling policy \(\optimalSigP\) can be selected as
	\begin{equation}
		\label{eq:an-optimal-signaling-policy}
		\optimalSigP(\penetration)
		=
		\begin{cases}
			1, & \expP_1(1,\penetration) < \TOne,\ \penetration>0,\ \detectionFunc(\penetration)>0,\\[3pt]
			0, & \text{otherwise}.
		\end{cases}
	\end{equation}
\end{prop}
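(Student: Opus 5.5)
The plan is to reduce everything to the single scalar \(\effSig=\sigProb\detectionFunc(\penetration)\in[0,\detectionFunc(\penetration)]\), since \(\TTwo\), \(\expP_1,\dots,\expP_5\) and hence \(\expPNE\) depend on \(\sigProb\) only through \(\effSig\). For fixed \(\penetration\), I would then compare the equilibrium accident probability at \(\effSig=0\) (no signaling) with its value at every \(\effSig>0\), using the region characterization of Proposition~\ref{prop:region-characterization}. If \(\penetration=0\) or \(\detectionFunc(\penetration)=0\), then \(\effSig\equiv 0\) for all \(\sigProb\), so \(\expPNE\) does not depend on \(\sigProb\), and the tie-breaking rule gives \(\optimalSigP=0\). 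So I assume \(\penetration>0\) and \(\detectionFunc(\penetration)>0\).

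The first step is the baseline at \(\effSig=0\). There \(\TOne=\TTwo\), and the table collapses to three cases.
\begin{itemize}
\item If \(\accidentFunc(1)<\TOne\), then \(\expPNE=\accidentFunc(1)\) (region \(\regionSet_1\)).
\item If \(\accidentFunc(0)\le\TOne\le\accidentFunc(1)\), then \(\expPNE=\TOne\) (the merged equality regions).
\item If \(\accidentFunc(0)>\TOne\), then \(\expPNE=\accidentFunc(0)\) (region \(\regionSet_5\)).
\end{itemize}

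The second step is to record the value of \(\expPNE\) by region for \(\effSig>0\), read off from Table~\ref{tb:region-characterization} together with \eqref{eq:threshold-relations}.
\begin{itemize}
\item Region \(\regionSet_1\) gives a value strictly below \(\TOne\).
\item Regions \(\regionSet_2,\regionSet_3,\regionSet_4\) give values in \([\TOne,\TTwo]\), hence at least \(\TOne\).
\item Region \(\regionSet_5\) gives \(\accidentFunc(0)>\TTwo\ge\TOne\).
\end{itemize}
Every equilibrium value also lies in \([\accidentFunc(0),\accidentFunc(1)]\) by Lemma~\ref{lem:boundedness-of-consistency-equation}. Next, I would show directly from the formula \(\expP_1=\accidentFunc(1)/\bigl(1+\effSig(\accidentFunc(1)-\accidentFunc(1-\penetration))\bigr)\) that \(\expP_1\) is strictly decreasing in \(\effSig\). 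This uses strict monotonicity of \(\accidentFunc\) and \(\penetration>0\), which give \(\accidentFunc(1)>\accidentFunc(1-\penetration)\). Two consequences follow. The condition \(\paraCond_1\) (\(\expP_1<\TOne\)) is upward closed in \(\effSig\), so if region \(\regionSet_1\) is reachable for some \(\sigProb\), it is reached at \(\sigProb=1\). Moreover, within \(\regionSet_1\) the value is minimized at \(\sigProb=1\); by strict monotonicity of \(\detectionFunc\) and \(\detectionFunc(\penetration)>0\), \(\sigProb=1\) gives the largest \(\effSig\).

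The third step is the case split.
\begin{itemize}
\item Case \(\expP_1(1,\penetration)<\TOne\). The value at \(\sigProb=1\) is strictly below \(\TOne\). Every \(\sigProb\) not in \(\regionSet_1\) gives at least \(\TOne\), and every \(\sigProb<1\) in \(\regionSet_1\), including the \(\effSig=0\) value \(\accidentFunc(1)\), gives a strictly larger \(\expP_1\). So \(\sigProb=1\) is the unique minimizer.
\item Case \(\expP_1(1,\penetration)\ge\TOne\). By upward closedness no \(\sigProb\) lands in \(\regionSet_1\); in particular \(\accidentFunc(1)\ge\TOne\) at \(\effSig=0\), so every \(\sigProb\) yields \(\expPNE\ge\TOne\). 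If \(\accidentFunc(0)\le\TOne\), then \(\sigProb=0\) attains \(\TOne\), which is minimal. If \(\accidentFunc(0)>\TOne\), then \(\sigProb=0\) attains \(\accidentFunc(0)\), which is a global lower bound by Lemma~\ref{lem:boundedness-of-consistency-equation}. Either way \(\sigProb=0\) is optimal, and ties go to \(0\).
\end{itemize}

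The main obstacle I expect is the well-definedness of \(\expPNE\) as a function of \((\sigProb,\penetration)\): a priori several regions might hold simultaneously, so that multiple equilibria have different accident probabilities. I would sidestep this by making the comparisons above hold for every equilibrium value. Every non-\(\regionSet_1\) equilibrium has value \(\ge\TOne\). At \(\effSig\) in region \(\regionSet_1\), the region-1 equilibrium value \(\expP_1\) is the smallest attainable. So the argument does not depend on which equilibrium is selected, as long as the selection is consistent with the table. The boundary case \(\effSig=0\), where the regions merge, would be handled using the remark in the proof of Proposition~\ref{prop:region-characterization}.
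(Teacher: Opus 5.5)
Your proof is correct, but it takes a different route from the paper's.

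\textbf{The paper's route.} It proceeds in four steps.
\begin{enumerate}
\item For fixed \(\penetration\), the realized region can only move forward along \(\regionSet_5\to\regionSet_4\to\regionSet_3\to\regionSet_2\to\regionSet_1\) as \(\sigProb\) grows (Lemma~\ref{lem:region-ordering}).
\item Hence \(\expPNE\) weakly rises and then weakly falls in \(\sigProb\), so its minimum sits at \(\sigProb\in\{0,1\}\) (Lemma~\ref{lem:endpoint-optimality}).
\item The optimized value is computed by splitting on where \(\TOne\) lies relative to \(\accidentFunc(0)\) and \(\accidentFunc(1)\) (Lemma~\ref{lem:optimized-equilibrium-accident-probability}).
\item Only then is the rule checked in each of those three cases.
\end{enumerate}

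\textbf{Your route.} You bypass the ordering and endpoint lemmas entirely with a threshold dichotomy:
\begin{itemize}
\item \(\regionSet_1\) is the only region whose value lies below \(\TOne\).
\item \(\paraCond_1\) is upward closed in \(\effSig\), and \(\expP_1\) is strictly decreasing in \(\effSig\), so \(\sigProb=1\) is best among all \(\regionSet_1\) outcomes.
\item Every other outcome is at least \(\max\{\TOne,\accidentFunc(0)\}\), and \(\sigProb=0\) attains this bound.
\end{itemize}
You then split on the rule's own condition \(\expP_1(1,\penetration)<\TOne\), not on \(\TOne\) versus \(\accidentFunc(0),\accidentFunc(1)\).

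\textbf{What each buys.} Your argument is shorter and self-contained. It uses only the region values in Table~\ref{tb:region-characterization} and the lower bound of Lemma~\ref{lem:boundedness-of-consistency-equation}. It also shows that \(\sigProb=1\) is the unique minimizer whenever the first line of the rule applies, so tie-breaking matters only in the other case. The paper's route yields more structure: the unimodal shape of \(\expPNE\) in \(\sigProb\), which explains the heatmap transitions, and a closed-form optimized value that it reuses for Theorem~\ref{thm:no-perverse-optimal}. Your dichotomy gives the same formula: \(\expP_1(1,\penetration)\) if \(\expP_1(1,\penetration)<\TOne\), and \(\max\{\TOne,\accidentFunc(0)\}\) otherwise.

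\textbf{Two small fixes.}
\begin{itemize}
\item When \(\penetration=0\) but \(\detectionFunc(0)>0\), \(\effSig\) is not identically zero, so your stated reason for ``signaling has no effect'' fails in that subcase. The conclusion still holds, but justify it differently. With zero V2V mass, every candidate profile has \(\du=\ds=\equiProfile_\groupN\), so \(\expP\profileInput=\accidentFunc(\equiProfile_\groupN)\). The equilibrium conditions for the V2V groups are vacuous, so the equilibrium set does not depend on \(\sigProb\).
\item Strict monotonicity of \(\detectionFunc\) plays no role in your argument. For fixed \(\penetration\), you only need \(\detectionFunc(\penetration)>0\) to make \(\effSig\) strictly increasing in \(\sigProb\).
\end{itemize}
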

The proof of the proposition proceeds in three steps: first we show how regions are ordered as the signaling probability varies, then we prove endpoint optimality, and finally we compare the optimized endpoint values. The first step is to show that, for fixed adoption level, the realized equilibrium region can move only in one direction as the signaling probability increases.
\begin{lem}
	\label{lem:region-ordering}
	Fix \(\penetration\). As \(\sigProb\) increases, the realized equilibrium regions can move only through a contiguous subchain of
	\begin{equation}
		\label{eq:region-ordering}
		\regionSet_5
		\to
		\regionSet_4
		\to
		\regionSet_3
		\to
		\regionSet_2
		\to
		\regionSet_1.
	\end{equation}
	In particular, the transition cannot move backward or skip an intermediate region.
\end{lem}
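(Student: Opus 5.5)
Throughout, fix \(\penetration\) and write \(\effSig=\sigProb\detectionFunc(\penetration)\in[0,1]\). The plan is to reduce everything to monotonicity in the single scalar \(\effSig\). If \(\detectionFunc(\penetration)=0\), then \(\effSig\equiv0\) and nothing depends on \(\sigProb\), so the realized region is constant and the claim is trivial. Otherwise \(\effSig\) is strictly increasing in \(\sigProb\), so it suffices to prove the ordering as \(\effSig\) increases. Inspecting Table~\ref{tb:region-characterization}, every quantity appearing in a condition \(\paraCond_i\) depends on \(\paraInput\) only through \(\effSig\) (with \(\penetration\) fixed). This is exactly the picture of Fig.~\ref{fig:p-line}.

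The first step records the monotonicity of each quantity in \(\effSig\).
\begin{itemize}
\item The threshold \(\TOne\) is constant.
\item The threshold \(\TTwo(\effSig)=1/(1+\accidentCost(1-\effSig))\) is strictly increasing.
\item Each candidate probability has the form \(\utilityU/(1+\effSig(\utilityU-\utilityV))\) with \(\utilityU\ge\utilityV\), hence is weakly decreasing in \(\effSig\). This covers \(\expP_1\), \(\expP_3\), \(\expP_2(\equiProfile_\groupN;\cdot)\) and \(\expP_4(\equiProfile_\groupVU;\cdot)\).
\item \(\expP_4(0)=\expP_5=\accidentFunc(0)\) is constant.
\end{itemize}
Next I rewrite the endpoint terms in the conditions. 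By definition, \(\expP_4(\penetration)=\expP_3\), \(\expP_2(0)=\expP_3\) and \(\expP_2(1-\penetration)=\expP_1\). Lemma~\ref{lem:candidate-probability-monotonicity} then gives \(\accidentFunc(0)\le\expP_3\le\expP_1\).

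The second step introduces four scalar functions of \(\effSig\):
\begin{itemize}
\item \(h_5=\TTwo-\accidentFunc(0)\), which is increasing;
\item \(h_4=\expP_3-\TTwo\), which is strictly decreasing;
\item \(h_2=\expP_3-\TOne\), which is weakly decreasing;
\item \(h_1=\expP_1-\TOne\), which is weakly decreasing.
\end{itemize}
In terms of these functions the region conditions read:
\begin{itemize}
\item \(\regionSet_5\iff h_5<0\);
\item \(\regionSet_4\iff h_5\ge0\) and \(h_4\ge0\);
\item \(\regionSet_3\iff h_4<0\) and \(h_2>0\);
\item \(\regionSet_2\iff h_2\le0\) and \(h_1\ge0\);
\item \(\regionSet_1\iff h_1<0\).
\end{itemize}
Each condition is thus an intersection of an up-set and a down-set of \(\effSig\), i.e. an interval. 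For the ordering, observe the following.
\begin{itemize}
\item Once \(h_5\ge0\) it stays so, which rules out returning to \(\regionSet_5\).
\item Once \(h_4<0\), \(\regionSet_4\) is excluded forever.
\item Once \(h_2\le0\), \(\regionSet_3\) (and \(\regionSet_4\), since \(\TOne\le\TTwo\)) is excluded forever.
\item Once \(h_1<0\), \(\regionSet_2\) is excluded forever, and \(h_2<0\) also holds because \(\expP_3\le\expP_1\).
\end{itemize}
Hence the interval for \(\regionSet_i\) lies weakly to the right of that for \(\regionSet_{i+1}\), which gives the order \eqref{eq:region-ordering} with no backward moves.

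The third step proves that no region is skipped. Proposition~\ref{prop:region-characterization}, together with the exhaustive case split on \(\expP\) relative to \(\TOne,\TTwo\), guarantees that for every \(\effSig\) at least one region is realized. So the union of the five ordered intervals covers the relevant \(\effSig\)-range. Consecutive intervals must therefore abut. For example, the right end of \(\regionSet_5\) is where \(h_5\) first becomes \(\ge0\). At that point, if \(h_4<0\), then \(\expP_3<\TTwo\). To argue that \(\regionSet_3\) is realized there, I need \(\TTwo=\accidentFunc(0)\le\expP_3\), which holds. This is exactly the sort of boundary check I would run at each junction.

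I expect the main obstacle to be this last step. Weak (rather than strict) monotonicity of \(\accidentFunc\) and \(\detectionFunc\) allows ties and degenerate cases: equality regions can coincide with neighbours, and \(\regionSet_3\) can be empty when \(\effSig=0\). So "skip" must be interpreted as skipping a region that is nonempty for some intermediate \(\effSig\). My first attempt would be to argue via continuity of all the \(h_j\) in \(\effSig\) and an intermediate-value argument at each transition point. I would then show that moving from \(\regionSet_{i+1}\) to \(\regionSet_{i-1}\) forces a point where the defining inequalities of \(\regionSet_i\) hold, using the sandwich \(\accidentFunc(0)\le\expP_3\le\expP_1\) and \(\TOne\le\TTwo\).
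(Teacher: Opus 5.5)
Your proposal follows the paper's proof. You reduce to the scalar \(\effSig=\sigProb\detectionFunc(\penetration)\) and use that \(\TTwo\) increases while \(\expP_3,\expP_1\) weakly decrease and \(\TOne,\expP_5\) stay fixed. You rewrite \(\paraCond_4\) and \(\paraCond_2\) through the endpoint identities, and you order the crossings with the sandwich \(\accidentFunc(0)\le\expP_3\le\expP_1\) together with \(\TOne\le\TTwo\). Your sign functions \(h_5,h_4,h_2,h_1\) and the continuity/intermediate-value check at each junction are a cleaner version of the paper's ``crossing order'' sentence for the no-skip claim, which uses continuity only tacitly.

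Two small repairs are needed.

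First, do not take coverage (some region is realized at every \(\effSig\)) from Proposition~\ref{prop:region-characterization} plus the exhaustive case split. That only classifies equilibria that exist, and existence is never proved. Coverage follows directly from your sign conditions: if \(h_5<0\) you are in \(\regionSet_5\); otherwise \(h_4\ge0\) gives \(\regionSet_4\); otherwise \(h_2>0\) gives \(\regionSet_3\); otherwise \(h_1\ge0\) gives \(\regionSet_2\); and otherwise you are in \(\regionSet_1\).

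Second, your sample junction check names the wrong region. Where \(h_5\) first reaches \(0\), you have \(\TTwo=\accidentFunc(0)\le\expP_3\), so \(h_4\ge0\) and \(\regionSet_4\) is realized there; the case \(h_4<0\) cannot occur. The remaining junctions work the same way:
\begin{itemize}
\item At an interior right end \(\effSig^*>0\) of \(\{h_4\ge0\}\), you have \(\expP_3=\TTwo>\TOne\). Hence \(h_2>0\) just beyond \(\effSig^*\), and \(\regionSet_3\) is realized.
\item At a zero of \(h_2\), you have \(\expP_1\ge\expP_3=\TOne\), so \(\regionSet_2\) is realized.
\end{itemize}
The only exception is \(h_4(0)=h_2(0)=0\) at \(\effSig=0\). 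There \(\regionSet_4\) and \(\regionSet_2\) coincide, and \(\regionSet_3\) is never entered. This is the same degeneracy the paper handles by merging the two equality rows when \(\effSig=0\).
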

\begin{pf}[Lemma~\ref{lem:region-ordering}]
	If \(\detectionFunc(\penetration)=0\), then \(\effSig=\sigProb\detectionFunc(\penetration)=0\) for all \(\sigProb\), so the realized region does not change with \(\sigProb\). Hence the realized path is a one-element subchain. We therefore assume \(\detectionFunc(\penetration)>0\). Then \(\effSig=\sigProb\detectionFunc(\penetration)\) increases with \(\sigProb\). Using the endpoint descriptions of \(\regionSet_2\) and \(\regionSet_4\), the equality-region conditions can be written as
	\[
	\regionSet_4:\quad \expP_5\leq \TTwo\paraInput\leq \expP_3\paraInput,
	\]
	\[
	\regionSet_2:\quad \expP_3\paraInput\leq \TOne\leq \expP_1\paraInput.
	\]
	Moreover, by Lemma~\ref{lem:candidate-probability-monotonicity}, the same endpoint relations also give the global ordering
	\[
	\expP_5\leq \expP_3\paraInput\leq \expP_1\paraInput.
	\]
	As \(\effSig\) increases, \(\TTwo\paraInput\) increases, while \(\expP_3\paraInput\) and \(\expP_1\paraInput\) weakly decrease; \(\TOne\) and \(\expP_5\) are fixed. This ordering is illustrated in Fig.~\ref{fig:p-line}. Hence the relevant crossing conditions can only switch in the order
	\[
	\TTwo=\expP_5
	\;\;\to\;\;
	\TTwo=\expP_3
	\;\;\to\;\;
	\expP_3=\TOne
	\;\;\to\;\;
	\expP_1=\TOne.
	\]
	These crossings correspond to the transitions
	\[
	\regionSet_5\to\regionSet_4,
	\qquad
	\regionSet_4\to\regionSet_3,
	\qquad
	\regionSet_3\to\regionSet_2,
	\qquad
	\regionSet_2\to\regionSet_1.
	\]
	These transitions can occur only in the forward direction. For example, once \(\TTwo\paraInput\geq \expP_5\), the system cannot return to \(\regionSet_5\), because \(\TTwo\paraInput\) increases with \(\effSig\) and \(\expP_5\) is fixed. Similarly, once \(\TTwo\paraInput\geq \expP_3\paraInput\), the system cannot return to \(\regionSet_4\), because \(\TTwo\paraInput\) increases while \(\expP_3\paraInput\) weakly decreases. The same reasoning applies to the crossings \(\expP_3\paraInput=\TOne\) and \(\expP_1\paraInput=\TOne\). Equality may persist over an interval of \(\sigProb\), but the ordering cannot reverse. Thus the realized path cannot move backward.
	
	Finally, the fixed ordering \(\expP_5\leq\expP_3\paraInput\leq\expP_1\paraInput\) prevents any intermediate region from being skipped. The crossing \(\TTwo\paraInput=\expP_3\paraInput\) cannot occur before \(\TTwo\paraInput=\expP_5\), because \(\expP_5\leq\expP_3\paraInput\). Similarly, \(\expP_3\paraInput=\TOne\) cannot occur before \(\TTwo\paraInput=\expP_3\paraInput\), because \(\TOne\leq\TTwo\paraInput\); and \(\expP_1\paraInput=\TOne\) cannot occur before \(\expP_3\paraInput=\TOne\), because \(\expP_3\paraInput\leq\expP_1\paraInput\). Therefore the realized region path can only be a contiguous subchain of \eqref{eq:region-ordering}, and this completes the proof.
\end{pf}
This region ordering rules out interior optima and leads to endpoint optimality.
\begin{lem}
	\label{lem:endpoint-optimality}
	Fix \(\penetration\in[0,1]\). The equilibrium accident probability is minimized by an endpoint signaling probability. That is,
	\[
	\min_{\sigProb\in[0,1]}\expPNE\paraInput
	=
	\min\{\expPNE(0,\penetration),\expPNE(1,\penetration)\}.
	\]
\end{lem}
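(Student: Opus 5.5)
Fix \(\penetration\). If \(\detectionFunc(\penetration)=0\), then \(\effSig=\sigProb\detectionFunc(\penetration)=0\) for every \(\sigProb\), so \(\expPNE\paraInput\) does not depend on \(\sigProb\) and the claim is trivial. We therefore assume \(\detectionFunc(\penetration)>0\) and regard \(\expPNE\) as a function of \(\effSig\in[0,\detectionFunc(\penetration)]\), with \(\effSig\) increasing in \(\sigProb\). The plan is to show that along the region chain of Lemma~\ref{lem:region-ordering}, \(\expPNE\) is \emph{quasi-concave} in \(\sigProb\): weakly increasing up to some point and weakly decreasing afterward. A quasi-concave function on \([0,1]\) attains its minimum at an endpoint, since for any \(\sigProb\) we have \(\expPNE\paraInput\ge\min\{\expPNE(0,\penetration),\expPNE(1,\penetration)\}\).

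First we read off the monotonicity of \(\expPNE\) in \(\effSig\) within each region from the last column of Table~\ref{tb:region-characterization}.
\begin{itemize}
\item On \(\regionSet_5\), \(\expPNE=\accidentFunc(0)\) is constant.
\item On \(\regionSet_4\), \(\expPNE=\TTwo\paraInput=1/(1+\accidentCost(1-\effSig))\) is weakly increasing in \(\effSig\).
\item On \(\regionSet_3\), \(\expPNE=\expP_3\) has the form \(\accidentFunc(\penetration)/(1+\effSig c)\) with \(c=\accidentFunc(\penetration)-\accidentFunc(0)\ge0\), so it is weakly decreasing.
\item On \(\regionSet_2\), \(\expPNE=\TOne\) is constant.
\item On \(\regionSet_1\), \(\expPNE=\expP_1\) is likewise weakly decreasing in \(\effSig\).
\end{itemize}

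Next we check the values at region transitions, so that the pieces glue into a single unimodal profile.
\begin{itemize}
\item At the \(\regionSet_5\to\regionSet_4\) boundary, \(\TTwo=\expP_5\), so the value is continuous.
\item At \(\regionSet_4\to\regionSet_3\), \(\TTwo=\expP_3\); this is the peak.
\item At \(\regionSet_3\to\regionSet_2\), \(\expP_3=\TOne\).
\item At \(\regionSet_2\to\regionSet_1\), \(\expP_1=\TOne\).
\end{itemize}
Even without relying on continuity, the values are ordered correctly. Every value on \(\regionSet_5\cup\regionSet_4\) is at most every later value up to the peak, because \(\accidentFunc(0)\le\TTwo\). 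After the peak, the values move through \(\expP_3\) (which lies in \((\TOne,\TTwo)\)), then \(\TOne\), then \(\expP_1<\TOne\). So the sequence is nonincreasing across the \(\regionSet_3\to\regionSet_2\to\regionSet_1\) transitions, and within each region it decreases by the monotonicity above. Combining this with Lemma~\ref{lem:region-ordering}, which says that as \(\sigProb\) increases the realized region traverses a contiguous forward subchain of \(\regionSet_5\to\regionSet_4\to\regionSet_3\to\regionSet_2\to\regionSet_1\), we conclude that \(\expPNE(\cdot,\penetration)\) is weakly increasing on an initial interval (within \(\regionSet_5\cup\regionSet_4\)) and weakly decreasing thereafter (within \(\regionSet_3\cup\regionSet_2\cup\regionSet_1\)). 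It is therefore quasi-concave, and the minimum is attained at \(\sigProb=0\) or \(\sigProb=1\).

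The main obstacle is that \(\expPNE\) must be well defined as a function, while Proposition~\ref{prop:region-characterization} allows a given \(\paraInput\) to satisfy several conditions \(\paraCond_i\) at once, for instance on boundaries between regions. To handle this, I would show that overlapping regions assign the same value. The overlaps are \(\regionSet_4\cap\regionSet_5\) only where \(\TTwo=\accidentFunc(0)\), \(\regionSet_3\cap\regionSet_4\) where \(\expP_3=\TTwo\), and similarly for the others. This follows from the monotonicity of \(\expP_2\) and \(\expP_4\) (Lemma~\ref{lem:candidate-probability-monotonicity}) together with the strict inequalities in \(\paraCond_1\), \(\paraCond_3\), and \(\paraCond_5\). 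Failing that, I would fix a selection and note that the unimodality argument applies to any selection consistent with the chain ordering.
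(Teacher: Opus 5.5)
Your proposal is correct and takes essentially the same route as the paper. You combine Lemma~\ref{lem:region-ordering} with the within-region monotonicity of the last column of Table~\ref{tb:region-characterization} to show that \(\expPNE(\cdot,\penetration)\) weakly increases through \(\regionSet_5,\regionSet_4\) and weakly decreases through \(\regionSet_3,\regionSet_2,\regionSet_1\), so the minimum is at an endpoint; your explicit checks across region transitions spell out what the paper leaves implicit. Your overlap worry is moot, though: for \(\effSig>0\) the strict inequalities in \(\paraCond_1,\paraCond_3,\paraCond_5\), together with \(\accidentFunc(0)\le\expP_3\le\expP_1\) and \(\TOne<\TTwo\), make the five conditions pairwise disjoint, so pairs such as \(\regionSet_4\cap\regionSet_5\) at \(\TTwo=\accidentFunc(0)\) are actually empty, and at \(\effSig=0\) the only possible overlap, \(\regionSet_2\cap\regionSet_4\), assigns the common value \(\TOne=\TTwo\).
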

\begin{pf}[Lemma~\ref{lem:endpoint-optimality}]
	Fix \(\penetration\in[0,1]\). By Lemma~\ref{lem:region-ordering}, as \(\sigProb\) increases, the realized equilibrium region can move only through a contiguous subchain \eqref{eq:region-ordering}. By Proposition~\ref{prop:region-characterization}, the corresponding equilibrium accident probabilities are
	\[
	\expP_5,\qquad
	\TTwo\paraInput,\qquad
	\expP_3\paraInput,\qquad
	\TOne,\qquad
	\expP_1\paraInput.
	\]
	As \(\sigProb\) increases, \(\expP_5\) and \(\TOne\) are constant, \(\TTwo\paraInput\) weakly increases, and \(\expP_3\paraInput\) and \(\expP_1\paraInput\) weakly decrease. Therefore, along any realized subchain, \(\expPNE\paraInput\) can only be constant or increasing before reaching \(\regionSet_3\), and constant or decreasing after leaving \(\regionSet_4\). Thus the minimum of \(\expPNE\paraInput\) over \(\sigProb\in[0,1]\) must occur at one of the endpoints.
\end{pf}
Endpoint optimality reduces the designer's problem to comparing the two endpoint signaling probabilities, \(\sigProb=0\) and \(\sigProb=1\).
\begin{lem}
	\label{lem:optimized-equilibrium-accident-probability}
	For each \(\penetration\in[0,1]\), let \(\optimalSigP\) be any optimal signaling policy as defined in Theorem~\ref{thm:no-perverse-optimal}. Then the optimized equilibrium accident probability is
	\begin{equation}
		\label{eq:optimized-equilibrium-accident-probability}
		\expPNE(\optimalSigP(\penetration),\penetration)
		=
		\begin{cases}
			\accidentFunc(0), & \TOne<\accidentFunc(0),\\[3pt]
			\min\{\TOne,\expP_1(1,\penetration)\},
			& \accidentFunc(0)\leq \TOne\leq \accidentFunc(1),\\[3pt]
			\expP_1(1,\penetration), & \accidentFunc(1)<\TOne.
		\end{cases}
	\end{equation}
\end{lem}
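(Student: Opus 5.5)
By Lemma~\ref{lem:endpoint-optimality}, it suffices to compute \(\expPNE(0,\penetration)\) and \(\expPNE(1,\penetration)\) and take their minimum. The computation is organized by where \(\TOne\) lies relative to \(\accidentFunc(0)\) and \(\accidentFunc(1)\).

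\emph{The case \(\sigProb=0\).} Here \(\effSig=0\), so \(\TTwo=\TOne\). In addition, every candidate accident probability reduces to \(\accidentFunc\) of the unsignaled mass. In particular, \(\expP_1(0,\penetration)=\accidentFunc(1)\), \(\expP_3=\accidentFunc(\penetration)\), \(\expP_5=\accidentFunc(0)\), and the equality regions take the value \(\TOne\). Proposition~\ref{prop:region-characterization} then gives:
\begin{itemize}
\item if \(\accidentFunc(1)<\TOne\), the realized region is \(\regionSet_1\) and \(\expPNE(0,\penetration)=\accidentFunc(1)\);
\item if \(\TOne<\accidentFunc(0)\), the realized region is \(\regionSet_5\) and \(\expPNE(0,\penetration)=\accidentFunc(0)\);
\item if \(\accidentFunc(0)\le\TOne\le\accidentFunc(1)\), there is a merged equality case. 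Its equilibrium value is \(\TOne\), which is realized either by \(\equiProfile_\groupN\) or by \(\equiProfile_\groupVU\), using the intermediate value theorem on \(z\mapsto\accidentFunc(z)\) as in the proof of Proposition~\ref{prop:region-characterization}.
\end{itemize}

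\emph{The case \(\sigProb=1\).} First I will show that in the two extreme cases the answers are forced by Lemma~\ref{lem:boundedness-of-consistency-equation}.
\begin{itemize}
\item If \(\TOne<\accidentFunc(0)\), then every equilibrium value is at least \(\accidentFunc(0)>\TOne\). The regions \(\regionSet_1\) and \(\regionSet_2\) are therefore impossible. Whichever of \(\regionSet_3,\regionSet_4,\regionSet_5\) is realized, its value is at least \(\accidentFunc(0)\), which equals the \(\sigProb=0\) value. Hence the minimum is \(\accidentFunc(0)\).
\item If \(\accidentFunc(1)<\TOne\), then every equilibrium value is at most \(\accidentFunc(1)<\TOne\). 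Only \(\regionSet_1\) is possible, giving \(\expP_1(1,\penetration)\le\accidentFunc(1)\). The minimum is therefore \(\expP_1(1,\penetration)\).
\end{itemize}

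The middle case, \(\accidentFunc(0)\le\TOne\le\accidentFunc(1)\), is the main obstacle. The \(\sigProb=0\) value is \(\TOne\), so I must show \(\min\{\TOne,\expPNE(1,\penetration)\}=\min\{\TOne,\expP_1(1,\penetration)\}\). I will split on the sign of \(\expP_1(1,\penetration)-\TOne\).
\begin{itemize}
\item If \(\expP_1(1,\penetration)<\TOne\), condition \(\paraCond_1\) holds at \(\sigProb=1\), so \(\expPNE(1,\penetration)=\expP_1(1,\penetration)\) and both sides equal \(\expP_1(1,\penetration)\).
\item If \(\expP_1(1,\penetration)\ge\TOne\), the realized region at \(\sigProb=1\) cannot be \(\regionSet_1\). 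Every other region has value at least \(\TOne\): in \(\regionSet_2\) it equals \(\TOne\), and in \(\regionSet_3,\regionSet_4,\regionSet_5\) it is at least \(\TOne\) by their threshold statements \(\Tstate_i\). So the minimum is \(\TOne=\min\{\TOne,\expP_1\}\).
\end{itemize}

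The delicate points are, first, using the threshold statements \(\Tstate_i\) (rather than the candidate formulas) to lower-bound the values of \(\regionSet_3,\regionSet_4,\regionSet_5\) by \(\TOne\), and second, handling the degenerate case \(\detectionFunc(\penetration)=0\) or \(\penetration=0\). In that degenerate case \(\sigProb=1\) coincides with \(\sigProb=0\) and \(\expP_1(1,\penetration)=\accidentFunc(1)\ge\TOne\), which is consistent with the formula. Uniqueness of \(\expPNE\) is taken from Proposition~\ref{prop:region-characterization}.
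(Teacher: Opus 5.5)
Your proposal is correct and follows essentially the same route as the paper. Both proofs reduce to the endpoints via Lemma~\ref{lem:endpoint-optimality}, read off the \(\sigProb=0\) value from the collapsed characterization, and then use Lemma~\ref{lem:boundedness-of-consistency-equation} in the two extreme cases and the \(\regionSet_1\) condition \(\expP_1(1,\penetration)<\TOne\) in the middle case, with your explicit lower bound of the \(\regionSet_2\)--\(\regionSet_5\) values by \(\TOne\) and your check of the degenerate case \(\penetration=0\) or \(\detectionFunc(\penetration)=0\) simply spelling out details the paper leaves implicit.
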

\begin{pf}[Lemma~\ref{lem:optimized-equilibrium-accident-probability}]
	Fix \(\penetration\in[0,1]\). By Lemma~\ref{lem:endpoint-optimality}, it is enough to compare the two endpoint signaling probabilities \(\sigProb=0\) and \(\sigProb=1\). When \(\sigProb=0\), we have \(\TTwo=\TOne\), so by Proposition~\ref{prop:region-characterization}, the no-signaling equilibrium accident probability is
	\[
	\expPNE(0,\penetration)
	=
	\begin{cases}
		\accidentFunc(0), & \TOne<\accidentFunc(0),\\[3pt]
		\TOne, & \accidentFunc(0)\leq \TOne\leq \accidentFunc(1),\\[3pt]
		\accidentFunc(1), & \accidentFunc(1)<\TOne.
	\end{cases}
	\]
	We compare the full-signaling endpoint against this benchmark. First suppose \(\TOne<\accidentFunc(0)\). By Lemma~\ref{lem:boundedness-of-consistency-equation}, every equilibrium accident probability satisfies \(\expPNE\paraInput\geq \accidentFunc(0)\). Hence
	\[
	\expPNE(1,\penetration)\geq \accidentFunc(0)=\expPNE(0,\penetration),
	\]
	so the optimized equilibrium accident probability is \(\accidentFunc(0)\). Next suppose \(\accidentFunc(0)\leq \TOne\leq \accidentFunc(1)\). Then \(\expPNE(0,\penetration)=\TOne\). By Proposition~\ref{prop:region-characterization}, the full-signaling endpoint has equilibrium accident probability below \(\TOne\) exactly when it lies in \(\regionSet_1\), equivalently when
	\[
	\expP_1(1,\penetration)<\TOne.
	\]
	Therefore the optimized equilibrium accident probability is
	\[
	\min\{\TOne,\expP_1(1,\penetration)\}.
	\]
	Finally suppose \(\accidentFunc(1)<\TOne\). By Lemma~\ref{lem:boundedness-of-consistency-equation}, every equilibrium accident probability satisfies \(\expPNE\paraInput\leq \accidentFunc(1)\). Hence
	\[
	\expPNE(1,\penetration)\leq \accidentFunc(1)=\expPNE(0,\penetration) < \TOne,
	\]
	so the optimized equilibrium accident probability is attained at \(\sigProb=1\). Since we also have \(\expPNE(1,\penetration)<\TOne\), the full-signaling endpoint lies in \(\regionSet_1\) in this case. Therefore, its equilibrium accident probability is
	\[
	\expPNE(1,\penetration)=\expP_1(1,\penetration).
	\]
	Combining the three cases proves the lemma.
\end{pf}
With the optimized endpoint value characterized, we now verify that the policy in Proposition~\ref{prop:optimal-policy} attains this optimized value under the stated tie-breaking convention.
\begin{pf}[Proposition~\ref{prop:optimal-policy}]
	Fix \(\penetration\in[0,1]\). By Lemma~\ref{lem:optimized-equilibrium-accident-probability}, the optimized equilibrium accident probability is characterized by \eqref{eq:optimized-equilibrium-accident-probability}. We show that the rule in \eqref{eq:an-optimal-signaling-policy} selects an endpoint attaining this optimized value in each case, with ties broken in favor of \(\sigProb=0\).
	
	First suppose \(\TOne<\accidentFunc(0)\). By Lemma~\ref{lem:optimized-equilibrium-accident-probability}, the optimized equilibrium accident probability is \(\accidentFunc(0)\). Since the no-signaling endpoint satisfies \(\expPNE(0,\penetration)=\accidentFunc(0)\), no signaling is optimal. Moreover, in this case \(\expP_1(1,\penetration)\geq \accidentFunc(0)>\TOne\) by Lemma~\ref{lem:boundedness-of-consistency-equation}, so the first line of \eqref{eq:an-optimal-signaling-policy} cannot hold. Hence the rule selects \(\sigProb=0\), as desired.
	
	Next suppose \(\accidentFunc(0)\leq \TOne\leq \accidentFunc(1)\). In this case, the no-signaling endpoint gives
	\[
	\expPNE(0,\penetration)=\TOne.
	\]
	By Proposition~\ref{prop:region-characterization}, the full-signaling endpoint has equilibrium accident probability below \(\TOne\) exactly when the full-signaling endpoint lies in \(\regionSet_1\), equivalently when
	\[
	\expP_1(1,\penetration)<\TOne.
	\]
	Thus full signaling is strictly better than no signaling exactly under the inequality in the first line of \eqref{eq:an-optimal-signaling-policy}. If this inequality fails, then no signaling is weakly optimal, and the tie-breaking convention selects \(\sigProb=0\). Hence the rule in \eqref{eq:an-optimal-signaling-policy} selects an optimal endpoint in this case.
	
	Finally suppose \(\accidentFunc(1)<\TOne\). By Lemma~\ref{lem:optimized-equilibrium-accident-probability}, the optimized equilibrium accident probability is \(\expP_1(1,\penetration)\), which is attained by the full-signaling endpoint whenever signaling has a nontrivial effect. Under the strict-monotonicity assumptions, if \(\penetration>0\) and \(\detectionFunc(\penetration)>0\), then
	\[
	\expP_1(1,\penetration)
	=
	\frac{\accidentFunc(1)}
	{1+\detectionFunc(\penetration)
		\bigl(\accidentFunc(1)-\accidentFunc(1-\penetration)\bigr)}
	<
	\accidentFunc(1)
	<
	\TOne.
	\]
	Therefore the first line of \eqref{eq:an-optimal-signaling-policy} holds, and the rule selects \(\sigProb=1\). If instead \(\penetration=0\) or \(\detectionFunc(\penetration)=0\), signaling has no effect, so the two endpoints tie and the tie-breaking convention selects \(\sigProb=0\). This is exactly the second line of \eqref{eq:an-optimal-signaling-policy}.
	
	Thus, in all cases, \eqref{eq:an-optimal-signaling-policy} selects an optimal endpoint, with ties broken in favor of no signaling. Therefore, it defines one optimal signaling policy under the stated tie-breaking convention.
\end{pf}

\subsection{Proofs of Main Theorems}
We now use Proposition~\ref{prop:region-characterization} and Lemma~\ref{lem:optimized-equilibrium-accident-probability} to prove Theorems~\ref{thm:perverse-effect-source} and~\ref{thm:no-perverse-optimal}. Theorem~\ref{thm:perverse-effect-source} follows from the equilibrium-region characterization and the numerical instance used in Fig.~\ref{fig:policy-accident-diff}.
\begin{pf}[Theorem~\ref{thm:perverse-effect-source}]
	The existence claim follows from the parameter instance used in Fig.~\ref{fig:policy-accident-diff}, where \(\expPNE\) increases over a nontrivial interval of V2V adoption levels \(\penetration\) under a fixed signaling probability \(\sigProb\).
	
	We also prove the accompanying claim that any local increase must pass through \(\regionSet_3\) or \(\regionSet_4\). By Proposition~\ref{prop:region-characterization}, the equilibrium accident probability in each region is given by the final column of Table~\ref{tb:region-characterization}.
	
	We first rule out the regions in which the equilibrium accident probability cannot increase with \(\penetration\). In \(\regionSet_2\), we have
	\[
	\expPNE\paraInput=\TOne,
	\]
	which is constant in \(\penetration\). In \(\regionSet_5\), we have
	\[
	\expPNE\paraInput=\expP_5=\accidentFunc(0),
	\]
	which is also constant in \(\penetration\). In \(\regionSet_1\), we have
	\[
	\expPNE\paraInput=\expP_1\paraInput,
	\]
	which weakly decreases as \(\penetration\) increases by weak monotonicity of \(\accidentFunc\) and \(\detectionFunc\). Therefore, within \(\regionSet_1\), \(\regionSet_2\), and \(\regionSet_5\), increasing \(\penetration\) cannot increase the equilibrium accident probability.
	
	The only remaining regions are \(\regionSet_3\) and \(\regionSet_4\). In \(\regionSet_3\), the equilibrium accident probability is
	\[
	\expPNE\paraInput=\expP_3\paraInput,
	\]
	and in \(\regionSet_4\), it is
	\[
	\expPNE\paraInput=\TTwo\paraInput.
	\]
	Moreover, \(\TTwo\paraInput\) is weakly increasing in \(\penetration\). Thus any local increase must arise through \(\regionSet_3\) or \(\regionSet_4\). This completes the proof.
\end{pf}
Theorem~\ref{thm:no-perverse-optimal} follows from the optimized equilibrium accident probability characterized above.
\begin{pf}[Theorem~\ref{thm:no-perverse-optimal}]
	By Lemma~\ref{lem:optimized-equilibrium-accident-probability}, the optimized equilibrium accident probability is given by \eqref{eq:optimized-equilibrium-accident-probability}. The three cases in \eqref{eq:optimized-equilibrium-accident-probability} are determined by the fixed quantities \(\TOne\), \(\accidentFunc(0)\), and \(\accidentFunc(1)\), and therefore do not change with \(\penetration\). We show that the optimized equilibrium accident probability is weakly decreasing in \(\penetration\) in each case.
	
	First suppose \(\TOne<\accidentFunc(0)\). Then
	\[
	\expPNE(\optimalSigP(\penetration),\penetration)=\accidentFunc(0),
	\]
	which is constant in \(\penetration\). Next suppose \(\accidentFunc(0)\leq \TOne\leq \accidentFunc(1)\). Then
	\[
	\expPNE(\optimalSigP(\penetration),\penetration)
	=
	\min\{\TOne,\expP_1(1,\penetration)\}.
	\]
	Since \(\TOne\) is constant and \(\expP_1(1,\penetration)\) is weakly decreasing in \(\penetration\), their minimum is weakly decreasing. Finally suppose \(\accidentFunc(1)<\TOne\). Then
	\[
	\expPNE(\optimalSigP(\penetration),\penetration)=\expP_1(1,\penetration),
	\]
	which is weakly decreasing in \(\penetration\).
	
	Therefore, in all three fixed benchmark cases, the optimized equilibrium accident probability is weakly decreasing in \(\penetration\). Hence, under an optimal signaling policy, increasing V2V adoption does not increase the equilibrium accident probability, completing the proof.
\end{pf}

\section{Discussion}
\label{sec:discussion}
The results above show that signaling is useful only when the system can enter the safe region \(\regionSet_1\), where \(\expP_1(\sigProb,\penetration)<\TOne\). This condition is restrictive. Indeed, since \(\detectionFunc(\penetration)\le 1\) and \(\accidentFunc(1-\penetration)\ge \accidentFunc(0)\), we have
\[
\frac{\accidentFunc(1)}
{1+\accidentFunc(1)-\accidentFunc(0)}
\le
\expP_1(1,\penetration).
\]
Thus, even under the most favorable detection case, \(\expP_1(1,\penetration)\) cannot fall below the left-hand side. Therefore, for \(\regionSet_1\) to be attainable, it is necessary that
\[
\frac{\accidentFunc(1)}
{1+\accidentFunc(1)-\accidentFunc(0)}
<
\TOne
=
\frac{1}{1+\accidentCost}.
\]
Assuming \(\accidentFunc(1)>0\), this is equivalent to
\[
\accidentCost
<
\frac{1-\accidentFunc(0)}{\accidentFunc(1)}.
\]
Hence \(\regionSet_1\) is truly a ``safe'' region: for signaling to be useful, the accident-cost parameter must not be too large relative to the baseline accident-risk function. In this sense, accident warnings are beneficial only when the underlying accident-risk environment is sufficiently mild for behavioral adaptation to reduce accident risk.

\section{Conclusion}
\label{sec:conclusion}
This paper revisits a partial-adoption V2V signaling model under a corrected accident-probability consistency equation. We show that increased V2V adoption can have adverse effects under non-optimal signaling probabilities, but that these effects can be eliminated by choosing the signaling policy optimally. Several extensions remain for future work. First, allowing noisy warning signals would provide a more general model of V2V information reliability. Second, incorporating network structure or dynamic adoption would allow the model to capture spatial and temporal effects that are abstracted away in the present aggregate formulation.


\begin{ack}
This material is based upon work supported by the National Science Foundation under award number 2440836 and by NASA under award number 80NSSC25M7102.
This work is also supported in part by ONR grant N000142612120.
\end{ack}


\bibliography{ifacconf}             

@inproceedings{gould2022partial,
	title={On partial adoption of vehicle-to-vehicle communication: When should cars warn each other of hazards?},
	author={Gould, Brendan T and Brown, Philip N},
	booktitle={2022 American Control Conference (ACC)},
	pages={627--632},
	year={2022},
	organization={IEEE}
}

@article{gould2023information,
	title={Information design for Vehicle-to-Vehicle communication},
	author={Gould, Brendan T and Brown, Philip N},
	journal={Transportation research part C: emerging technologies},
	volume={150},
	pages={104084},
	year={2023},
	publisher={Elsevier}
}

@article{wardrop1952road,
	title={Road paper. some theoretical aspects of road traffic research.},
	author={Wardrop, John Glen},
	journal={Proceedings of the institution of civil engineers},
	volume={1},
	number={3},
	pages={325--362},
	year={1952},
	publisher={Thomas Telford-ICE Virtual Library}
}

@article{correa2004selfish,
	title={Selfish routing in capacitated networks},
	author={Correa, Jos{\'e} R and Schulz, Andreas S and Stier-Moses, Nicol{\'a}s E},
	journal={Mathematics of Operations Research},
	volume={29},
	number={4},
	pages={961--976},
	year={2004},
	publisher={INFORMS}
}

@article{dafermos1984some,
	title={On some traffic equilibrium theory paradoxes},
	author={Dafermos, Stella and Nagurney, Anna},
	journal={Transportation Research Part B: Methodological},
	volume={18},
	number={2},
	pages={101--110},
	year={1984},
	publisher={Elsevier}
}

@article{kamenica2011bayesian,
	title={Bayesian persuasion},
	author={Kamenica, Emir and Gentzkow, Matthew},
	journal={American Economic Review},
	volume={101},
	number={6},
	pages={2590--2615},
	year={2011},
	publisher={American Economic Association}
}

@article{bergemann2019information,
	title={Information design: A unified perspective},
	author={Bergemann, Dirk and Morris, Stephen},
	journal={Journal of Economic Literature},
	volume={57},
	number={1},
	pages={44--95},
	year={2019},
	publisher={American Economic Association 2014 Broadway, Suite 305, Nashville, TN 37203-2425}
}

@article{acemoglu2018informational,
	title={Informational Braess’ paradox: The effect of information on traffic congestion},
	author={Acemoglu, Daron and Makhdoumi, Ali and Malekian, Azarakhsh and Ozdaglar, Asuman},
	journal={Operations Research},
	volume={66},
	number={4},
	pages={893--917},
	year={2018},
	publisher={INFORMS}
}

@inproceedings{wu2019information,
	title={Information design for regulating traffic flows under uncertain network state},
	author={Wu, Manxi and Amin, Saurabh},
	booktitle={2019 57th Annual Allerton Conference on Communication, Control, and Computing (Allerton)},
	pages={671--678},
	year={2019},
	organization={IEEE}
}

@article{massicot2021competitive,
	title={Competitive comparisons of strategic information provision policies in network routing games},
	author={Massicot, Olivier and Langbort, C{\'e}dric},
	journal={IEEE Transactions on Control of Network Systems},
	volume={9},
	number={4},
	pages={1589--1599},
	year={2021},
	publisher={IEEE}
}

@article{mehr2019will,
	title={How will the presence of autonomous vehicles affect the equilibrium state of traffic networks?},
	author={Mehr, Negar and Horowitz, Roberto},
	journal={IEEE Transactions on Control of Network Systems},
	volume={7},
	number={1},
	pages={96--105},
	year={2019},
	publisher={IEEE}
}

@article{balakrishna2013information,
	title={Information impacts on traveler behavior and network performance: State of knowledge and future directions},
	author={Balakrishna, Ramachandran and Ben-Akiva, Moshe and Bottom, Jon and Gao, Song},
	journal={Advances in dynamic network modeling in complex transportation systems},
	pages={193--224},
	year={2013},
	publisher={Springer}
}

@article{zhu2022information,
	title={Information design in nonatomic routing games with partial participation: Computation and properties},
	author={Zhu, Yixian and Savla, Ketan},
	journal={IEEE Transactions on Control of Network Systems},
	volume={9},
	number={2},
	pages={613--624},
	year={2022},
	publisher={IEEE}
}

@article{liu2016effects,
	title={Effects of information heterogeneity in Bayesian routing games},
	author={Liu, Jeffrey and Amin, Saurabh and Schwartz, Galina},
	journal={arXiv preprint arXiv:1603.08853},
	year={2016}
}

@article{ben1991dynamic,
	title={Dynamic network models and driver information systems},
	author={Ben-Akiva, Moshe and De Palma, Andre and Isam, Kaysi},
	journal={Transportation Research Part A: General},
	volume={25},
	number={5},
	pages={251--266},
	year={1991},
	publisher={Elsevier}
}

\end{document}